\definecolor{darkspringgreen}{rgb}{0.09, 0.45, 0.27}
\title{Auctioning with Strategically Reticent Bidders\thanks{The work is supported by a Google Research Faculty Award and NSF grant CCF-2303372. }} %Simulating against Signaling
\author{Jibang Wu\\  University of Chicago \\ \small wujibang@uchicago.edu 
        \and 
        Ashwinkumar Badanidiyuru \\ Google  \\ \small  ashwinkumarbv@google.com 
        \and 
        Haifeng Xu\\  University of Chicago \\ \small haifengxu@uchicago.edu 
        }
\newtheoremstyle{nonindented}{1ex}{1ex}{}{}{\bfseries}{.}{.5em}{}
\newtheoremstyle{indented}{1ex}{1ex}{\itshape\addtolength{\leftskip}{0.6cm}\addtolength{\rightskip}{0.6cm}}{}{\bfseries}{.}{.5em}{}
\theoremstyle{nonindented}
\theoremstyle{indented}
\theoremstyle{plain}
\newtheorem{theorem}{Theorem}[section]
\newtheorem{corollary}[theorem]{Corollary}
\newtheorem{proposition}[theorem]{Proposition}
\newtheorem{definition}[theorem]{Definition}
\newtheorem{remark}[theorem]{Remark}
\newtheorem{example}[theorem]{Example}
\newtheorem*{conjecture*}{Conjecture}
\renewcommand{\bar}{\overline}
\DeclareMathOperator{\mps}{\geq_{\text{MPS}}}
\def\max2{\qopname\relax n{max2}}
\def\max{\qopname\relax n{max}}
\def\argmax{\qopname\relax n{argmax}}
\def\Pr{\qopname\relax n{\mathbf{Pr}}}
\def\Ex{\qopname\relax n{\mathbf{E}}}
\def\Var{\qopname\relax n{\mathbf{Var}}}
\def\Cov{\qopname\relax n{\mathbf{Cov}}}
\newcommand{\RR}{\mathbb{R}}
\newcommand{\bsigma}{\boldsymbol{\sigma}}
\def\G{\mathcal{G}}
\def\M{\mathcal{M}}
\def\S{\mathcal{S}}
\def\s{\boldsymbol{s}}
\def\states{ {\boldsymbol \theta} }
\def\types{ \boldsymbol{t}}
\def\bids{\boldsymbol{b}} 
\def\part{P} 
\def\v{\boldsymbol{v}}
\def\bsigma{\boldsymbol{\sigma}}
\def\blambda{\boldsymbol{\lambda}}
\def\btau{\boldsymbol{\tau}}
\def\util{\texttt{U}_{i}}
\def\exputil{\bar{\texttt{U}}_{i}}
\newcommand{\indep}{\perp \!\!\! \perp}
\newenvironment{lp*}{\begin{equation*}  \begin{array}{lll}}{\end{array}\end{equation*}}
\newcommand{\oldmodel}{\texttt{Mis-Bidders}} 
\newcommand{\newmodel}{\texttt{SR-Bidders}} 
\newcommand{\infobid}{\texttt{SR-bidder}} 
\newcommand{\infoIC}{{IIC}}
\newcommand{\emm}{{Expected Meta Mechanism}} 
\newcommand{\smm}{{Simulated Meta Mechanism}}
\begin{document}

% Title page for title and abstract only.
% \begin{titlepage}

\date{}

\maketitle

\begin{abstract}
% In a generalized second price auction such as the one in online advertisement, auctioneers map the bids into scores (e.g., by conversation rate) that reflects the true revenue the bidders would get, and use the score to determine the winner and its payment. It is important to understand how to elicit information from bidders to improve revenue and preserve incentives. While we illustrate with examples that truthful information elicitation cannot be easily guaranteed in most cases, we prove that if certain conditions are satisfied, then truthful information elicitation is the equilibrium strategy and maximal social welfare can be achieved, even under the multi-item allocation setting. In addition, we proved the computational hardness of auctioneer revenue maximization in such auction, except for certain special cases. 

%Classic mechanism design often assumes  that a bidder's action is restricted to report a type or a signal, possibly untruthfully. In today's digital economy, bidders are holding increasing amount of private information about the auctioned items. And due to legal or ethical concerns, they would demand to reveal partial but truthful information, as opposed to report untrue signal or  misinformation.   To accommodate such bidder behaviors in auction design, 
We propose and study a novel mechanism design setup where each bidder holds two kinds of private information: (1)   \emph{value type}, which can be misreported; (2)   \emph{information variable}, which the bidder may  want to conceal or partially reveal, but importantly, \emph{not} to misreport.  
We refer to bidders with such behaviors  as \emph{strategically reticent bidders}. Among others, one direct motivation of our model is the ad auction in which many ad platforms today elicit from each bidder not only their private value per conversion but also their private information about Internet users (e.g., user activities on the advertiser's websites) in order to improve the platform's estimation of conversion rates.     

We show that in this new setup, it is still possible to design mechanisms that are both \emph{Incentive and Information Compatible} (IIC). We develop  two different black-box transformations, which convert any mechanism $\mathcal{M}$ for classic bidders to a mechanism $\bar{\mathcal{M}}$ for strategically reticent bidders,  based on either outcome of expectation or expectation of outcome, respectively. We identify  properties of the original mechanism $\mathcal{M}$ under which the transformation leads to IIC mechanisms $\bar{\mathcal{M}}$. Interestingly, as corollaries of these results, we show  that running VCG with bidders' \emph{expected values} maximizes welfare, whereas the mechanism using \emph{expected outcome} of Myerson's auction maximizes revenue. Finally, we study how regulation on the auctioneer's usage of information can lead to more robust mechanisms.

\end{abstract}
 
% \end{titlepage} 

\section{Introduction}

In standard auction models with private values (independent or correlated),  bidders are often assumed to  know their values of the auctioned items. Therefore,  the mechanism designer will undertake a \emph{value elicitation} procedure, during which each bidder is asked to report his value (i.e., the \emph{bid}). 
However, in many applications of today's data-driven world, it becomes more and more difficult for bidders to know their exact values of the to-be-sold items. For example, in online ad auctions, advertisers, as the bidders, usually do not have sufficient data or computation resources to estimate their exact values for an ad slot, e.g., the conversion values of an Internet user.
Instead, each bidder may be able to access some information, such as user activities on his store websites, that are predictive of the conversion values. 
In this case, the auctioneers cannot elicit bidders' exact values anymore (as bidders themselves do not even know their values), but they can go through an \emph{information elicitation} procedure to incentivize bidders to share their \emph{information variable}.\footnote{In interdependent value auctions \cite{milgrom1982theory}, the bidder's information is also called a \emph{signal}. We shall call it an information variable instead, as we will reserve the term ``signal'' for other use, specifically, as the carrier of partial information about the information variable. It will also be differentiated from the term ``signaling scheme'' as a randomized procedure that implements partial information revelation.} Classic works, e.g., models of interdependent value auctions \cite{milgrom1982theory,roughgarden2013optimal}, have treated such information elicitation almost the same as value elicitation --- the bidders are assumed to either truthfully reveal the information variable they observed or misreport any other possible variable from its support. 
While this could be a valid assumption, we argue below that there are many situations in which \emph{partial information revelation}, as opposed to information \emph{misreporting}, can better model the bidders' strategic behaviors, especially under the new norms of information economics today.
% especially given the increasingly widespread usage of information today. 
    
% A prominent example, as one of the most successful  applications of auctions,  is perhaps the online \emph{advertising auctions} in which advertisers/bidders usually do not have sufficient data neither  computation resources/technology to estimate their real value for an ad opportunity, i.e., a user impression.

%  First, in some situations, it is simply much more \emph{practical} to reveal information, though possibly partially, than to misreport information. 
Firstly, in some applications, partial information revelation is already a more common practice than misreporting information. For instance, today's advertisers are used to installing web-scripts or sharing cookies access on their websites to Internet ad platforms for conversion tracking~\cite{rosales2012post,CVRtracking,JoinData}. Any advertiser as a bidder can decide what type of information will be revealed to the platform (e.g., purchase behaviors, browsing activities, or some mixture) and how fine-grained the revealed information is (ranging from detailed user browsing trajectories to monthly user activity statistics). These choices naturally correspond to the behaviors of selective information revelation. In contrast, it is almost impossible for a bidder to misreport such information, as generating fake user cookies often involves highly technical and costly procedures.  

Secondly, the ad auction application motivates another key reason to model bidders' partial information revelation: the behavior of misreporting information may have moral or even legal consequences for advertisers. On one hand, regulations  such as the European Commission's \emph{Code of Practice on Disinformation} have already been imposed to prevent disinformation on Internet platforms. On the other hand, many advertisers are large cooperations with good business ethics and   high reputations; they tend to avoid any disinformation in their marketing practices.  Nevertheless, it is a perfectly reasonable business strategy to reveal partial information, i.e., determine what types of information and to what levels they want ad exchange platforms to track.

Thirdly, taking a step back, even if misreporting information could be a valid option for bidders, revealing partial information could lead to strictly more utility, purely from the incentive perspective. In fact, the classic bidder behavior model  cannot capture the simple situation where the bidder chooses to reveal no information about the information variable he observed. The following Example \ref{ex:silent-power} shows that, from a bidder's perspective, the option of staying silent about his private information can be much better than being forced to  report any state.

% \begin{example}[Feature Manipulation as the Partial Information Revelation Scheme.]  \label{ex:feature-manipulation}
% It is wlog to model the manipulation as a (possibly randomized) function $T: \Phi \to  \Psi$ on the feature space $\X$. Let $\psi \sim T(\phi)$ be the (possibly randomly) manipulated data on $\phi$. Let $\sigma(\theta) = \Pr[\theta | \phi ],  s(\theta)= \Pr[\theta | \psi ] $, then $s$ is a mean-preserving spread of $\sigma$,
% $$ \sigma(\theta) = \Pr[\theta | \phi ]  = \sum_{\psi} \Pr[\theta | \psi ] \Pr[ \psi | \phi ] = \Ex[ s(\theta) | \sigma ] $$ 
% \end{example}

\begin{example}[The Incentive for  Silence] \label{ex:silent-power}  
Consider a single-item auction with 3 bidders. Only bidder~1 has a private information variable $\theta_1 \in \{ 0,1 \}$, called a \emph{state}, which is a uniformly random bit. The state  determines every bidder's value:  $v_1(\theta_1) = 0.9, v_2(\theta_1) = \theta_1, v_3(\theta_1) = 1 - \theta_1$. Bidder $1$ can privately observe the realized $\theta_1$ but all other bidders and the auctioneer only know its distribution. 

Suppose the auctioneer  runs the standard second price auction. If  bidder 1 is forced to report a state (truthfully or untruthfully), his utility will always be $0$ since only bidder 2 or 3 can be the winner, depending on $\theta_1$ is $1$ or $0$.  
%This is because regardless of what state he reports, one of bidder 2 or 3 will have value $1$, which is greater than his value $0.9$.
% \footnote{So here truthfulness remains a dominant strategy equilibrium that achieves maximum welfare through the standard second-price auction in this example,  while the \scv{} is not satisfied for bidder 1. }   It is thus impossible for him to win the auction.
In contrast, suppose the  auction is modified to allow an extra action for  bidders -- ``silence'': when a bidder chooses to be silent, the auctioneer will use the expected bidder values to run the second price auction. In this case, if bidder 1 simply keeps silent, he will always win and receive utility $0.4=0.9 - 0.5$, because the values of bidder 2 and 3 are both estimated as $0.5$.  
\end{example}

\subsection{Our Contributions } 
%%% to summarize the technical results
{The key conceptual contribution of this paper is the} formulation of a novel class of auction design problem with strategically reticent bidders, conveniently referred to as \newmodel{}.\footnote{The term ``strategic reticence'' is coined by \citet{chen2007bluffing} to characterize trader's information hiding behavior in prediction markets. Our paper adopts this term to describe the bidder's partial revelation behavior in auctions.} {Assembling the different primitives together into a clean model turns out to be a highly non-trivial task, and we elaborate on a few key challenges in Section \ref{sec:model-remark}.}  In our model,  each bidder privately observes a \emph{type variable} and an \emph{information variable} that jointly determine his value. They capture different types of user knowledge --- while the type variable may be misreported, the information variable can be partially revealed but cannot be misreported.
% The type variable models the bidder's established knowledge (e.g., value per user conversion), while the  information variable models the bidder's (possibly partial) knowledge extracted from the collected raw data (e.g., an Internet user's activity). 
% They may misreport their type variable and partially reveal their information variable. 
% To capture truthful bidder behaviors under this expanded action space, 
To design mechanisms that take both sources of information as input and accommodate the different strategic behavior patterns, we adopt the solution concepts of classic mechanism design and introduce {an augmented truthfulness notion of} \emph{Incentive and Information Compatibility} (\infoIC{}), under which all bidders truthfully report their type variables and also fully reveal their information variables.
We show that designing truthful mechanisms for \newmodel{} is still without loss of generality by invoking the  revelation principle.   
% We then provide a general mechanism design characterization for such model, and shows that the revelation principle still holds. 

%Perhaps unsurprisingly, it is challenging to design tractable mechanisms for the rich and complex space of strategically reticent bidder behaviors. 
% Mechanism design has an extremely rich set of literature. We do not want to ``reinvent all the wheels'' for our new setup.  
Mechanism design has already established a thorough understanding of many classic problems thus we do not plan to ``reinvent all the wheels'' for our new setup.
Instead, we employ a fundamental paradigm from algorithm design and use  \emph{reduction} to build   black-box transformations, coined \emph{meta mechanisms}, that can convert any classic mechanism to a mechanism for our generalized bidder behavior model. For two natural classes of meta mechanisms, based on either the ``outcome of the expectations'' or the ``expectation of the outcomes'', we identify conditions on the original mechanisms that will make the new mechanism IIC for \newmodel{}. Interestingly, as corollaries of these two meta mechanisms, we show that the meta mechanism simply using the outcome of Vickrey auction executed with expected bidder values achieves maximum welfare. In contrast, a different meta mechanism that outputs the expectation of the outcomes generated by Myerson's  auction for each possible bidder value profile maximizes the revenue under \newmodel{}. 
% , though the generalization technique is interestingly distinct from that used in the welfare maximization mechanism.  Specifically, the mechanism instead take the expectation of outcomes in Myerson's auction with realized bidder value for maximum revenue.
One caveat about both results above is that both mechanisms can only guarantee ex-post \infoIC{} --- unlike the dominant-strategy IC of Vickrey and Myerson's auction for classic setups. We demonstrate that this relaxation is inevitable since it is impossible to achieve dominant-strategy IIC in our setup. 

Finally, we consider the situation in which the auctioneer can further commit to never estimating a bidder $i$'s value with other bidders' information (despite others' information can be used to refine the value estimation of bidder $i$ due to correlation). In reality, such commitment may be enforced by regulation policies that limit the usage of a bidder's private information.  Perhaps surprisingly, it turns out that such restriction strictly benefits the auctioneer --- it restores the dominant strategy \infoIC{} in both mechanisms while preserving the auctioneer's objective value. 

\subsection{Discussions on Related Works}
% Since the interdependent auction is first introduced by \cite{milgrom1982theory}, there have been continued interest in the corresponding mechanism design problem that considers social welfare maximization\cite{roughgarden2013optimal, eden2018interdependent, eden2019combinatorial}, revenue maximization \cite{roughgarden2013optimal, Emek12, cremer1988full}. 
% % and other objectives \cite{}. 
% However, the vast majority of the literature assumes some kind of single-crossing condition, as otherwise it is generally impossible to achieve full efficiency \cite{dasgupta2000efficient, jehiel2001efficient}. In contrast, our strategic reticent bidder model enables us to study the interdependent value auction from a different perspective, especially without requiring any single-crossing related condition.

Information disclosure in auctions has been extensively studied. However, most previous works  focused on the opposite side of ours, i.e., information revelation by the auctioneer. Along that line, \citet{milgrom1982theory} lay down the foundation and propose the well-known ``Linkage Principle'' in a symmetric interdependent-value auction setup and show that the auctioneer's revenue will also improve by  revealing more information about the item to bidders. Nevertheless, it is later shown that such full transparency may not be optimal in other settings, e.g., when the bidder values are asymmetric or when the auction formats change~\cite{li2019revenue, hausch1987asymmetric, perry1999failure,Emek12, bro2012send,Syrgkanis13,Badanidiyuru2018targeting}.
Specifically, the auctioneer may increase revenue by carefully obfuscating   item information through signaling schemes. \citet{fu2012ad} show that, under mild assumptions, the auctioneer  maximizes the revenue in the Myerson's optimal auction by  revealing full information to the bidder. In another related work, \citet{daskalakis2016does} consider the joint design of optimal signaling and auctioning mechanism. 

% In contrast, our study focuses on the setting where the bidders hold private information for their values. Also, instead of revealing information to bidders as in all the above works, our mechanism needs to incentivize information elicitation from bidders and ensure that the elicited information benefits the auctioneer's objective. 
% \jw{can understand our setting as a special case of information intermediary}

% In all the works mentioned above, it is the auctioneer who has the information advantage and reveals information to bidders. 
% In our model, however, the bidders have access to private information and reveal it to the auctioneer. 
Different from all works mentioned above, this paper focuses on the setup where bidders have the information advantage and may strategically reveal it to the auctioneer. 
As far as we know, there have been very few works concerning such an \newmodel{} behavior model. The closest to our work is perhaps that of \citet{shen2019buyer}, which considers bidder signaling in the private value auction model to influence the auctioneer's \emph{prior-dependent} mechanisms and characterize bidders' signaling equilibrium. However, their model and research questions are both different from ours. Specifically, our model has both type variables, which the bidder can misreport, and information variables, which the bidder can partially reveal. In contrast, bidders in the model of \citet{shen2019buyer} only have private value types and may signal information about the type. More importantly, our research focuses on the auctioneer's design problem of optimal mechanism, whereas \citet{shen2019buyer} fix the auction format and study bidders' design problem of optimal signaling. 

Another related line of works considers  ``signaling bidder'' \cite{bos2020optimal}, a notion to capture the bidder behavior model with signaling concern, that is, how the bids are perceived by the outsider in the aftermarket -- such incentives  exist particularly in artwork \cite{mandel2009art} or takeover \cite{liu2012takeover} bidding to establish bidder's reputation~\cite{giovannoni2014reputational, wan2009rfq}. So in these situations, the bidder's utility is not only decided by the allocation and payment but also affected by the posterior belief of his value. In contrast, the \newmodel{} only care about how their bids are perceived by the auctioneer and its consequential effect on payment and allocation.   

% \hf{maybe talk about Bergmann's paper and multi-sender signaling. }
%Weiran Shen's paper s on buyer signaling is relevant to us.

Finally, there have also been studies about an information intermediary who has the knowledge of  bidders' values and signals this information to the auctioneer so as to influence the outcome of a mechanism, akin to Bayesian persuasion~\cite{Kamenica2011}. The seminal work by \cite{bergemann2015limits} provides a clean characterization of the feasible set of induced consumer surplus and seller revenue in the bilateral trading setup. Several works~\cite{shen2018closed, cummings2020algorithmic, cai2020third,alijani2020limits} have followed up to extend this result to more general or slightly modified settings. 

% \subsection{Organization}

%  Section \ref{sec:model} presents the formal model of strategically reticent bidders, and Section \ref{sec:basics} studies basic properties of the proposed model such as revelation principle and solution concepts.  Section  \ref{sec:wel-max} and  \ref{sec:rev-max} develop two different black-box transformations to convert any mechanism for classic bidders to a mechanism for \newmodel{}. As corollaries of these two transformations, we obtain welfare-maximizing and revenue-maximizing mechanisms respectively.  In Section \ref{sec:regulate}, we study how information regulation may affect the equilibrium of the mechanisms. 

% In Sections \ref{sec:model} and \ref{sec:basics} we present the formal model of strategically reticent bidders, the corresponding solution concepts and the necessary model assumptions.
% as well as its connection with classic problems.  
% In Section \ref{} we survey related work. 
% Section \ref{sec:hybrid} develops the general mechanism that achieves maximum welfare, while Section \ref{sec:rev-max} explore the additional power of auctioneer especially for its application in ad auction. 
% In Section \ref{sec:multi-item}, we show the generalization to the combinatorial setting.
% The study of \infobid{} raises many further research directions, and we show two generalization directions in Section \ref{sec:multi-item} and \ref{sec:hybrid}. 

% \vspace{1mm}
% \noindent \textbf{Preliminaries on signaling schemes and experiments.}
\section{Preliminaries on Information Revelation}\label{sec:prelim}

Since information revelation is central to our study, it is worthwhile to formally introduce the standard models for describing information revelation, i.e., signaling schemes and the induced Blackwell experiments.  Consider a random variable $\theta$ drawn from a commonly known prior distribution $g$ supported on  a finite state space $\Theta$, i.e. $g  \in \Delta(\Theta)$.
A \emph{signaling scheme} is a pair $(\pi, \Sigma)$ where $\pi: \Theta \to \Delta(\Sigma)$ and $\Sigma$ is the support of signals. $\pi(\sigma | \theta)$ specifies the probability of sending signal $\sigma \in \Sigma$ conditioned on realizing state $\theta \in \Theta$. The signal $\sigma$  carries partial information about the state $\theta$ via its correlation with $\theta$.  Formally, with the prior belief $g$,  a receiver of   $\sigma$  can derive an updated posterior distribution about $\theta$  via the Bayes' rule: $\Pr(\theta|\sigma)= \frac{ \pi(\sigma | \theta) g(\theta) }{ \sum_{\theta\in \Theta} \pi(\sigma | \theta) g(\theta) }$. This means one can equivalently think of a signal $\sigma$  as a distribution over state space $\Theta$, i.e., $\sigma \in \Delta(\Theta)$. Therefore, a signaling scheme effectively induces a distribution over all possible beliefs in $\Delta(\Theta)$, which is realized to $\sigma \in \Delta(\Theta)$ with probability $\Pr(\sigma) = \sum_{\theta \in \Theta} \pi(\sigma|\theta)g(\theta) $. 
This leads to the notion of \emph{(Blackwell) experiment}  \cite{blackwell1953equivalent}, which is  described by $\tau \in \Delta(\Delta(\Theta))$ as a distribution of beliefs satisfying the law of total probability mass $\Ex_{\sigma \sim \tau}[\sigma]=g$, where the experiment outcome $\sigma \sim \tau$ is directly a belief of state.
We remark that, from the perspective of the information receiver, there is a subtle difference between the two notions: a signaling scheme spells out the signal generation process $\pi$, whereas an experiment only describes a distribution over beliefs~\cite{brooks2022information}. For the purpose of analysis in this paper, these interpretations of partial revelation are equivalent, though their subtle difference matters from a modeling perspective --- we defer the detailed discussion to Section~\ref{sec:model-remark}. 

A seminal result due to \citet{blackwell1953equivalent} establishes how to compare the informativeness of two experiments $\tau, \lambda$. We say an experiment $\tau$ is more informative than $\lambda$, if there exists a signaling scheme that maps a signal $\sigma \sim \tau$ to a signal $s \sim \lambda$, or equivalently, $\tau$ is a \emph{mean preserving spread} of $\lambda$.  By definition, an experiment $\tau$ is a mean preserving spread of another experiment $\lambda$, denoted as $\tau \mps \lambda$, if the experiment outcomes $\sigma \sim \tau, s \sim \lambda$ can be correlated so that $\Ex[\sigma | s] = s$ for any $s$. 
% \hf{I deleted the following as it is already implied by $\Ex[\sigma | s] = s$ for any $s$. }
%and namely the mean (prior) is preserved as $\Ex_{\sigma \sim \tau}[\sigma] = \Ex_{s \sim \lambda}[s] = g$. 
If so,  a signaling scheme $\pi: \Sigma \to \Delta(\S)$ that generates experiment $\lambda$ can be constructed based on their correlation, i.e., $\pi(s|\sigma)=\Pr(s|\sigma)$. %This will directly enable the bidder in our model to transform their experiment outcome $\sigma$ into a less informative experiment outcome $s$. 
Such partial order of experiments is also known as the second order stochastic dominance, or Blackwell informativeness under different contexts.
Moreover, the informativeness of experiments closely connects to decision problems. Our result  will need the following useful and well-known result  of \citet{blackwell1953equivalent}. 
\begin{theorem}[\citet{blackwell1953equivalent}] \label{thm:blackwell}
$\tau \in \Delta(\Delta(\Theta)) $ is a mean-preserving spread of $ \lambda\in \Delta(\Delta(\Theta))$ if and only if $\Ex_{\sigma \sim \tau}[u(\sigma)] \geq \Ex_{s \sim \lambda}[u(s)] $ for any convex function $u: \Delta(\Theta) \to \RR$.
\end{theorem} 

% Mathematically, a signaling scheme  $\pi_i$ for $\theta_i$ generates a random variable $s_i$, i.e., the \emph{signal}, that is correlated with $\theta_i$. Thus the realized signal $s_i$ carries partial information about $\theta_i$ through their collreation. Specially, given $s_i$, one can derive an updated posterior distribution about $\theta_i$ via Bayes updates. Therefore,  we can equivalently think of each signal $s_i \in \Delta_{\Theta_i}$ as a (posterior) distribution over $ \Theta_i$. Let $\lambda_i$ denote the induced distribution of $s_i$ by any signaling scheme $\pi_i$. By definition of conditional probability, we have the following constraint on $\lambda_i$
% \begin{equation}\label{eq:signal-feasibility}
%  \text{Bayes Plausibility: }  \qquad   \Ex_{s_i \sim \lambda_i} [s_i] = g_i, \text{ where }  s_i \in \Delta_{\Theta_i}. 
% \end{equation}

% However, bidders cannot misreport their information variable and may only reveal \emph{partial information} about it (e.g., due to legal or ethical concerns, as mentioned previously). 

% That is, since $\sigma_i$ is modeled to capture the bidder's partial knowledge of the state, we directly describe it as a point in the simplex of $\Theta_i$ that reflects its correlation with $\theta_i$. 

% \textbf{A remark on implementation of a Blackwell experiment.}  It is worthwhile to mention that, to implement a Blackwell experiments, a bidder does not need to truly observe the underlying information variable. Instead, all he needs is the ability to generate such signals. 
We remark that, in practice, the auctioneer's predictive model based on the bidder's feature set can be viewed as a Blackwell experiment; as long as the label set is public knowledge (e.g., whether the user converted is traced by auctioneer's web-scripts), any \emph{consistent} modification (such as noise injection, attribute deletion, re-scaling and grouping) on the feature revealed to the auctioneer is mathematically equivalent to an information signaling scheme (i.e., garbling~\cite{marschak1968economic}). 

% Since the advertisers are able to collect $\sigma_i$ as features of this Internet user on his website that can be highly correlated with $\theta_i$, the estimation of $\Pr(\theta | \sigma)$ becomes a typical supervised learning problem on the dataset $(\Sigma, \Theta)$. 
% While strategic behavior patterns are still expected in auctions of this kind, they are fundamentally different from misreporting in classic auction setups. 

\begin{remark}
[Connections between feature manipulation and partial information revelation]
Consider a typical classification problem on the dataset $(\Sigma, \Theta)$, where $\sigma \in \Sigma$ is the feature vector of an Internet user, $\theta \in \Theta$ is the user label that relates to the bidder's value. The bidder privately collects a set of features $\Sigma$, while the corresponding set of labels $\Theta$ is public knowledge. If the bidder directly shares the raw features $\Sigma$ with auctioneer, the risk-minimizing classifier learned from this dataset, assuming it is sufficiently large, tends to be the well-known \emph{Bayes optimal classifier}, $\Pr(\theta | \sigma), \forall \theta, \sigma$ \cite{shalev2014understanding}. This Bayes optimal classifier essentially recovers the data generation process: each feature vector $\sigma$ can be viewed as a signal of some Blackwell experiment $\tau$, from which one can infer a distribution of possible user labels $\theta$; the distribution of $\sigma$ in the dataset specifies the signal distribution of $\tau$. 
Meanwhile, if the bidder follows any feature manipulation scheme on $\Sigma$, it can be fully captured by a (possible randomized) mapping $\pi: \Sigma \to \Delta(\S)$ to some transformed feature set $\S$. The best classifier to learn from this manipulated dataset $(\S, \Theta)$ is another Bayes optimal classifier, $\Pr(\theta | s), \forall \theta, s$. The Bayes plausibility implies that $\sum_{s\in \S} \Pr(\theta | s) \pi(s | \sigma) = \Pr(\theta | \sigma), \forall \theta, \sigma$. So $\pi$ can be viewed as a signaling scheme that acts on any realized data signal $\sigma$ to generate a garbled data signal $s$, and the auctioneer learns the posterior $\Pr(\theta | s)$ as her knowledge on the user label. As such, if the auctioneer trains  on sufficiently many data points with garbled feature set $\S$ supplied by a bidder, the auctioneer's Bayes classifier will become an experiment $\lambda$ with $\tau\mps \lambda$.
% will tend to be an experiment over $\Delta(\Delta(\Theta))$, despite being possibly less informative than $\tau$.
% the posterior distribution $\Pr(\theta | s)$.
% As such, partial revelation becomes the only viable strategic behavior in such problem of feature  elicitation. % \hf{Why? Delete this sentence maybe --- Not clear to me why advertiser cannot misreport feature.}\jw{They can, and this remark shows that from misreported feature, the auctioneer is only learning an less informative experiment. That's how misreport becomes partial revelation when the label is public}
\end{remark}
% \hf{Related work: Berggman's price discriminatino paper, and multiple sender to one receiver }

\section{A Model of Auctioning with \newmodel{}} \label{sec:model}
% \todo{simplify the model setting from interdependent to private}
In this section, we formally introduce the auction setup with \newmodel, where each bidder has two kinds of private information --- a \emph{type variable} and an \emph{information variable}.  
% The type variable models the bidder's established knowledge (e.g., value per user conversion), while the  information variable models the bidder's (possibly partial) knowledge extracted from the collected raw data (e.g., an Internet user's activity). 
Like classic auctions, the type may be misreported. So \newmodel{} is a strict generalization of the classic misreporting bidder behavior models, conveniently referred to as \oldmodel{}. What makes \newmodel{} strictly richer than \oldmodel{} is the additional component of information variables that cannot be misreported but only \emph{partially revealed}; this naturally captures the consequence of bidders' strategic behaviors in sharing  data with auctioneer. 
% As mentioned, strategic behaviors of this kind arises from the information elicitation process as the result of data manipulation. 
To focus on the strategic aspects of the problem, we will abstract away the machine learning procedures and assume {the dependence of the learning outcome on the bidder's information}
%the  predictive model learnt from the shared data 
is described by a Blackwell experiment. Employing standard Bayesian inference, the auctioneer will infer a posterior belief about the information variable, conditioned on any realized experiment outcome. We elaborate on our problem next. % To formalize the partial information revelation in our model, we start by introducing several useful notions and terminologies from information design. 

\subsection{A Generalized Value Model}
\vspace{1mm}
\noindent \textbf{Basic Setup. } 
% \paragraph{Basic Setup.} 
We focus on the basic setup of auction design with single item allocation and correlated bidder values (the insistence of correlated value is due to realistic motivations that an advertiser's data tend to be correlated with others'). 
The item is of interest to $n$ {bidders}. The \emph{exact} value of each bidder $i\in [n]$, denoted as $v_i(t_i; \theta_i)$, is a function of his own type and value-relevant state, $v_i: T_i \times \Theta_i \to \RR$, where $T_i$ is the {possibly infinite} set of types, and $\Theta_i$ is the finite set of states. 
Each bidder $i$ privately observes a realization of \emph{type variable} $t_i \sim h_i$ and \emph{information variable} $\sigma_i \sim \tau_i$, which is a signal from the experiment $\tau_i$. 
Here, the type distribution $h_i$ supports on $T_i$, whereas the information variable itself $\sigma_i \in \Delta(\Theta_i)$ is a belief of state with a distribution that supports on $\Theta_i$. Notably, bidder $i$  does not necessarily observe the underlying state $\theta_i \sim g_i$; this captures the common situations where advertisers' information may only be partially predictive about an Internet user's conversion.
The experiment $\tau_i \in \Delta(\Delta(\Theta_i))$ models the data generation process that the signal $\sigma_i\in \Delta(\Theta_i)$ observable by bidder $i$ (e.g., inferred from an Internet user's browsing behavior) is generated from the underlying value-relevant state $\theta_i$ (e.g., the user's hidden true willingness to purchase). In Section~\ref{sec:model-remark}, we will further discuss on why it is both natural and crucial to model $\theta_i$ instead of $\sigma_i$ as the state in this problem. 
% ; that is, $\tau_i$ is a mean-preserving spread of $g_i$ satisfying $\Ex_{\sigma_i \sim \tau_i} [\sigma_i] = g_i$. 
% This fact also turns out to matter in its mechanism design problem and we defer detailed discussions to Section~\ref{sec:model-remark}.
%This setup also turns out to be necessary to avoid degeneracy in its mechanism design problem and we defer detailed discussions to Section~\ref{sec:model-remark}. 
%With the prior of states $g_i\in \Delta(\Theta_i)$, the experiment $\tau_i$ conforms to its prior by satisfying the Bayes plausibility constraint, $\Ex_{\sigma_i \sim \tau_i} [\sigma_i] = g_i.$ 
% \jw{Do we need to claim $\tau_i \mps g_i$ here? Because it is not clearly how to interpret $g_i \in \Delta(\Theta)$ into an element in $\Delta(\Delta(\Theta))$. You can either think of it as an experiment that supports on a single signal $\Delta(\Theta)$ or an experiment that supports on the fully-revealing signals $\sigma = \theta$. }
% \jw{Another concern is that one could argue that we can directly model  $\sigma$ as our state (in some sense, there is no absolute state in nature)... Instead, the key reason, we (from the perspective of auctioneer) have to model $\theta$ instead of $\sigma$ as the state, is that $\theta$ and its prior $g_i$ is known to the auctioneer, and $\sigma$ and its experiment $\tau_i$ should be unknown to the auctioneer. This is clarified in Section~\ref{sec:model-remark}.}
Let $g$ denote the joint prior distribution of $\states$ with each $g_i$ being the marginal distribution of $\theta_i$ satisfies $g_i(\theta_i) = \sum_{\theta_{-i}} g(\theta_i, \theta_{-i})$. We assume the distribution $h, g$ are public knowledge.
Throughout this paper, we make the following assumptions about the information structure of the variables introduced above:
\begin{enumerate}
    \item We assume bidders' type variables $\{t_i\}_{i \in [n]}$ are independent with any of the information variables $\{\sigma_i\}_{i \in [n]}$, since they capture different aspects of the bidders' value (e.g., an advertiser's value per conversion \emph{v.s.} his information of whether a user visits his website). 
    \item We assume bidders' type variables $\{t_i\}_{i \in [n]}$ are independent with each other, like the classic independent type model~\cite{myerson1981optimal}.
    % ; let $h_i$ denote the distribution of bidder $i$'s type variable $t_i$.
    The state variables $\{ \theta_i \}_{i=1}^n $ may be correlated with each other. When there is uncertainty about the state $\theta_i$, the auctioneer may use the knowledge of other state $\theta_j$ and their correlations  to  better estimate $v_i$. 
    \item We assume the conditional independence, $\theta_i \indep \sigma_{j} | \sigma_i, \forall i, j\in [n]$, such that, if $\sigma_i$ is fully known, then $\theta_i$ (and its corresponding value $v_i(t_i; \theta_i)$) becomes independent of any other $\sigma_j$; in other words, $\sigma_i$ is a \emph{sufficient predictor}~\cite{luo2014efficient} of $v_i$.\footnote{Without this assumption, the bidders' values will become \emph{interdependent} since a bidder's value depends on other bidders information even given all his information. We leave this intriguing generalization for future research, and choose to focus on a more basic setup in this work.} Therefore, the value model is \emph{not} interdependent under full information revelation, but becomes interdependent if bidder $i$ reveals partial information, in which case other bidders' signals can refine estimation of $\theta_i$. 
\end{enumerate}

% Henceforth, $v_i(t_i, \sigma_i) := \Ex_{\theta_i\sim \sigma_i} v_i(t_i, \theta_i) $ denotes the bidder $i$'s expected value given his observation.
% We denote the \emph{state profile} as $\states \coloneqq (\theta_1, \dots, \theta_n)$ with the support $\Theta\coloneqq \Theta_1 \times \cdots \times \Theta_n $. Let $g(\states)$  denote the joint prior distribution of $\states$. Hence, $g_i(\theta_i)$ as the marginal distribution of $\theta_i$ satisfies $g_i(\theta_i) = \sum_{\theta_{-i}} g(\theta_i, \theta_{-i})$. We can also view $g_i\in \Delta(\Theta_i)$ as the prior of states, so the distribution of information variable $\tau_i$ conforms to their prior by satisfying the Bayes plausibility constraint, $\Ex_{\sigma_i \sim \tau_i} [\sigma_i]  = g_i.$ 
% % $\sum_{\sigma_i}  \Pr[\theta_i | \sigma_i] \Pr [\sigma_i]  = g_i(\theta_i), \forall \theta_i \in \Theta_i.$ 
% The distribution $g$ is public knowledge.
% Like many information elicitation setups \cite{chen2010gaming,dimitrov2008non,kong2018water}, operation of our mechanism will also rely on the prior distribution~$g$. 

% The distribution $g_i$ is public knowledge. Like many information elicitation setups \cite{chen2010gaming,dimitrov2008non,kong2018water}, operation of our mechanism will also rely on the prior distribution~$g$. 

\vspace{1mm}
\noindent \textbf{The Strategically Reticent Bidder Behavior Model  (\newmodel). } 
The main difference between our model and classic auction setups lies in the bidder's behavior. 
% Specifically, each bidder $i$ in our model is strategically reticent with his information variable --- they can signal noisy information about $\sigma_i$ but cannot reveal \emph{false} information. 
% Similar to $\sigma_i$, the random variable $s_i$ models the partial knowledge of $\theta_i$ and we directly describe it as a point in the simplex of $\Theta_i$ that reflects its correlation with $\theta_i$. 
Formally, each bidder's action in our model consists of two components: (1) a (possibly untruthful) report $b_i \in T_i$ of his type; (2) a (possibly partially revealing) signal $s_i \in \Delta(\Theta_i)$ of his information. 
Type reporting requires no further explanation. By convention, we also call $b_i$ the \emph{bid} which may be different from the true $t_i$ since misreporting is possible. The bidder's partial revelation of information is modeled as another experiment $\lambda_i$. $s_i \sim \lambda_i$ is the realized outcome of the experiment $\lambda_i \in \Delta(\Delta(\Theta_i))$. Notably, experiment $\lambda_i$ is conducted on top of  $\tau_i$ and thus reveals no more information than $\tau_i$, i.e.,  $\tau_i \mps \lambda_i$. As mentioned, such experiment $\lambda_i$ can be without loss of generality implemented from experiment $\tau_i$ and a signaling scheme $\pi_i: \Sigma_i \to \Delta(\S_i)$ that specifies the correlation between $\sigma_i \sim \tau_i$ and $s_i\sim \lambda_i$. We will think of the bidder $i$ directly commits to an experiment $\lambda_i$ (induced by some $\pi_i$ and $\lambda_i$) and the signals $s_i \sim \lambda_i$ are drawn from the experiment. This is in fact necessary to avoid degeneracy in its mechanism design problem, since the auctioneer in this case need not to know $\tau_i$ (see detailed discussion in Section \ref{sec:model-remark}).
We will use the notation $\tau_i(\sigma_i), \lambda_i(s_i)$ to denote the probability of the corresponding signal realization from the experiments.

% We remark that both the distribution of information variable $\tau_i$ and the distribution of state $g_i$ can be interpreted as experiments, where each $\sigma_i\sim \tau_i$ carries (possibly partial) information of $\theta_i$, each $\theta_i\sim g_i$ carries perfect information of $\theta_i$. 

% Mathematically, $s_i \in \Delta(\Theta_i)$ is a random variable correlated with the state $\theta_i$. Hence, $s_i$ carries partial information about $\theta_i$ in the sense that given $s_i$, one can derive an updated posterior distribution about $\theta_i$ via Bayes updates.
We adopt the standard assumption in the literature of information elicitation and information design \cite{wolinsky2002eliciting, dughmi2017algorithmic, dughmi2019algorithmic,  bergemann2019information,kamenica2019bayesian}, and assume each bidder $i$ commits to an {experiment} $\lambda_i$ to reveal partial information about $\theta_i$.  The commitment assumption is commonly adopted in previous works for information senders \cite{bergemann2019information,kamenica2019bayesian} (including bidders in auctions \cite{shen2019buyer}). 
Such behaviors are also well-justified in real applications. For example, many online advertisers are only willing to reveal user conversion statistics on their websites but not to reveal detailed browsing activities. Such partial information revelation action is usually pre-specified, agreed upon by both parties, and implemented through software. It thus naturally serves the purpose of commitment. 

\subsection{Auctioneer's Inference} 
%\paragraph{Joint Signals and Induced Posterior Distributions. } % The auctioneer observes all bidders' commitment of signaling schemes $\{ \pi_i \}_{i \in [n]}$. 
In an auction with \newmodel{}, the auctioneer receives two pieces of information: the bid profile $\bids=(b_1, b_2, \dots, b_n)$ and  signal profile  $\s= (s_1, s_2, \dots, s_n)$. 
What can the auctioneer infer from $\bids$ and  $\s$? 
%In classic auction setup (private or interdependent values),   bidders' reported types together fully determine the values of all bidders, though possibly untrue due to misreport. However, in auctions with \newmodel{}, bidders' reported types and signals together can only determine a \emph{distribution} over bidder values.  
Firstly, the auctioneer can infer a distribution of the state profile $\states=(\theta_1, \dots, \theta_n)$. Given $\s$, using the knowledge of $g$ and each bidder's experiment $\lambda_i$, the conditional probability of state profile $\states$ can be derived via Bayes updates as follows,
 \begin{eqnarray}\nonumber
      \Pr(\states| \s) 
      &=& \frac{  \Pr(\s, \states) }{\sum_{\states'} \Pr(\s, \states')} 
    = 
      \frac{   g(\states) \prod_{i\in[n]}\Pr( s_i| \theta_i)  }{  \sum_{\states'}    g(\states')  \prod_{i\in[n]}\Pr( s_i|\theta'_i) }
        \\ \nonumber
          &=& 
        %   = 
          \frac{   g(\states) \prod_{i\in[n]}\Pr(  \theta_i|s_i) \lambda_i(s_i)/g_i(\theta_i)  }{  \sum_{\states'}    g(\states')  \prod_{i\in[n]}\Pr( \theta'_i|s_i) \lambda_i(s_i)/g_i(\theta'_i) } \\\label{eq:updates-state}
%      &=& \frac{   g(\states) \prod_{i\in[n]}\Pr_i(  \theta_i|s_i) \lambda_i(s_i)/g_i(\theta_i)  }{  \sum_{\states'}    g(\states')  \prod_{i\in[n]}\Pr_i( \theta'_i|s_i) \lambda_i(s_i)/g_i(\theta'_i) } \\
      &=& \frac{   g(\states) \prod_{i\in[n]}\Pr(  \theta_i|s_i)/g_i(\theta_i)  }{  \sum_{\states'}    g(\states')  \prod_{i\in[n]}\Pr(\theta_i|s_i)/g_i(\theta'_i) },
    %   \\ 
    %   &=& \frac{   g(\states) \prod_{i\in[n]}s_i(  \theta_i)/g_i(\theta_i)  }{  \sum_{\states'}    g(\states')  \prod_{i\in[n]}s_i( \theta'_i)/g_i(\theta'_i) },
 \end{eqnarray}
%  \jw{I do not think that we need $\Pr_i( \theta_i|s_i)$ here? It seems to be equivalent to $\Pr( \theta_i|s_i)$?}
%  \begin{equation}\label{eq:updates-state}
%       \Pr(\states| \s) = \frac{  \Pr(\s, \states) }{\sum_{\states} \Pr(\s, \states)} =  \frac{   f(\states) \prod_{i\in[n]}\pi_i( s_i; \theta_i)  }{  \sum_\states    f(\states)  \prod_{i\in[n]}\pi_i( s_i; \theta_i) },  
%  \end{equation}
where: (1) the second equality uses the fact that, conditioned on the information profile $\states$,  each bidder's signal is sampled independently according to its own experiment; (2) equation \eqref{eq:updates-state} cancels out $\lambda_i(s_i)$ on both the denominator and enumerator.
% \hfr{ (2) Equation \eqref{eq:updates-state} uses definition $\Pr(\theta_i|s_i) = s_i(\theta_i)$ and also cancels out $\Pr(s_i)$ on both the denominator and enumerator.} 
Therefore, similar to the meaning of a signal $s_i$,  we can think of $\s$ directly as a belief over the state profile $\states$, as derived in the above equation \eqref{eq:updates-state}.

Secondly, the auctioneer can determine an expected \emph{value profile} $\v$ of all bidders with the bid profile $\bids$ and the posterior belief of state profile $\theta$ given $\s$, 
\begin{equation}\label{eq:auctioneer-info}
v(\bids; \s) \coloneqq \Ex_{\states \sim \s}[v(\bids;\states)] = \sum_{\states \in \Theta}  \v(\bids ; \states) \Pr( \states|\s). 
\end{equation} 
We similarly denote $v_i( b_i ; \s)$ as the expected valuation of bidder $i$.
% Notably, by our assumption of value function $v_i(t_i||\theta_i, \theta_0)$, once $\theta_i$ is fully revealed, we can denote the expected value as $v_i( b_i || \theta_i) $, since $\theta_{-i}$ does not affects the value function given $\theta_i$.  
If only $\sigma_i$ is fully revealed, we will write bidder $i$'s value as $v_i( b_i ; \sigma_i) := \Ex_{\theta_i\sim \sigma_i} v_i(t_i; \theta_i) $, since $\sigma_i$ is assumed to be a sufficient predictor of $v_i$. 
However, this notational simplification is not valid anymore when bidder $i$ reveals partial information ---  the auctioneer may use the $s_{-i}$ to estimate $\theta_i$, so the input to $v_i$ has to be $\s$ in general. Thus the auctioneer's inference becomes much simpler if bidders reveal full information; this is an advantage of considering truthful mechanisms, as we will do later.   As a convention, here $s_{-i}$ denotes a vector of all signals excluding $s_i$ and  $(\sigma_i, s_{-i})$ denotes the signal profile $\s$ with $s_i$ replaced by $\sigma_i$. The same rule also applies to all other notations such as the type profile $\types$, information profile $\bsigma$, value profile $\v$ or experiment profiles $\blambda$. We can also interpret $\blambda$ as a single aggregated experiment that generates $\s$. We will use $\blambda$ to denote the joint distribution of signal profile $\s$, where the probability of signal profile $\s$ from the aggregated experiment is derived as follows,
\begin{equation}\label{eq:joint-scheme}
    \blambda(\s) = \sum_{\states\in \Theta} \Pr(\s, \states) = \sum_{\states\in \Theta}    g(\states)  \prod_{i\in[n]}\frac{\Pr( \theta_i|s_i) \lambda_i(s_i)}{g_i(\theta_i)}. 
\end{equation} 

Lastly, based on the posterior distribution of $\states$ given $\s$, the auctioneer can infer a value \emph{distribution} $V(\bids;\states)$ such that $V(\bids;\states) = \v(\bids;\states)$ with probability $\Pr(\states|\s)$. Throughout this paper, $v$ will always be used as the deterministic value, whereas the capitalized $V$ is the distribution of values.

\subsection{An Illustrative Example: Ad Auction with CVR Estimation}\label{sec:ad-auction} 

To provide more real-world context, we elaborate on a particular kind of ad auction setup as a concrete instantiation of our model. We also expect the \newmodel{} model to appear in many auctions and pricing problems (e.g., for hotel rooms, concert tickets, etc.). 
Given auctioneers' growing technological and resource advantage from their popular platforms, we believe that more and more auctions on the Internet will feature such information elicitation.

In ad auctions, one primary objective of the advertisers is the conversion value of an Internet user, i.e., after watching a display ads, how much value would the user generate for the advertisers in expectation. Hence, the value of an bidder (i.e., advertiser) is modeled as, $v_i(t_i;\theta_i) = t_i \cdot c_i(\theta_i)$, a product of value per user conversion $t_i$ and the {likelihood} $c_i(\theta_i)$ that the user converts, i.e., the conversion rate (CVR). In practice, $c_i$ can be some carefully-designed function or any black-box machine learning model based on the (latent) user type $\theta_i$~\cite{ContextualSignal}. Each bidder~$i$ has the private knowledge on $t_i$ regarding his margin on user purchases or engagement, as well as $\sigma_i\in \Delta(\Theta_i)$, which reflects a belief of the user type $\theta_i$ based on the features of this Internet user privately collected on the bidder's website. The information assumptions of our model are well justified in such auctions.
The type variable and information variable model different aspects of the bidder's value and thus are naturally independent. Each bidder $i$ usually have sufficient information on the Internet user regarding his own platform; if it is fully revealed, other bidders' information is unlikely to further improve the estimation of the conversion rate to $i$'s platform. So $\sigma_i$ can be assumed as a sufficient predictor of $v_i$. 

The auctioneer (i.e., ad exchange platforms) elicits each bidder's value per conversion $b_i$ and signal $s_i$ as the user data that he chooses to share with the auctioneer. Typically, bidders can choose to reveal information at different levels of granularity, ranging from detailed daily activities for each Internet user to aggregated monthly statistics or customer segmentation, which corresponds to experiments of different informativeness levels (see e.g., ~\cite{CVRtracking, JoinData}). 
% Notably, in this case, bidders cannot reveal misinformation since such signaling schemes are implemented by granting ad exchange platforms the access right to (different levels of) their online traffic data.  
The auctioneer will then combine her own data with bidders' data as signal profile $\s$ to infer a belief of user type and estimate a conversion rate $c_i(\s)=\sum_{\theta_i} \Pr(\theta_i|\s) c_i(\theta_i)$ for each bidder $i$. The expected value of bidder $i$ as defined in equation \eqref{eq:auctioneer-info} becomes $v_i(b_i; \s) =  b_i \cdot c_i(\s)$~\cite{ConversionBid, ContextualSignal} (see Section \ref{sec:prelim} for the discussion about connections between feature manipulation and partial information revelation). {In practice, the auctioneer often maps the computed expected value profile to an allocation and payment rule \cite{paes2020competitive,sundararajan2016prediction}.  However, we will show later that this practice may not necessarily be the best for revenue maximization by simply treating bidders' values as the expected value $v_i(b_i; \s)$. The fact that auctioneer now has a distribution of bidder values turns out to be a fundamental difference between our setup and classic auction design problems. 
}

\subsection{Additional Remarks on the Modeling. } \label{sec:model-remark}
Before concluding this  modeling section, we make a few remarks highlighting the subtleties of modeling information revelation in auction design and how our model overcomes these challenges. 
First, we point out that a simpler value model with only the information variable $\sigma_i$ (without the type variable $t_i$) is not only less realistic  but also is not technically interesting.   In this  case,  the auctioneer can simply run the first price auction on the expected value $\v$ given signal profile $\s$. Without the option of misreporting to lower his payment from the expected value, the winner will always pay the  true expected value in the first price auction. Consequently, any information revelation scheme is a weakly dominant strategy, and  the resultant game is trivial and not interesting. Moreover, the Cremer-Mclean mechanism~\cite{creemer1985optimal,cremer1988full} cannot be applied to our problem to  extract full surplus because: (1) that mechanism has only ex-ante incentive constraints whereas we require ex-pose incentive constraints; (2)   their assumption does not necessarily hold due
to  independent $t_i$.
% even under the weaker Bayesian sense. 
%  to arrive at the current model, which we believe is the most basic yet non-trivial modeling of auctioning with \newmodel{}. 
% \jw{Do you think here we should also remark the the independence assumption on $t_i$ make sure that mechanism such as Cremer-Mclean cannot extract the full surplus? }
% \jw{Discussion on full surplus extraction. Please check}
 % Notably, while we mathematically represent bidder $i$'s signaling scheme as a convex decomposition of the prior distribution $g_i$  as in Equation \eqref{eq:signal-feasibility}, it is important to note that this formulation is only a mathematical representation used for the ease of derivations, but is not how a signaling scheme is truly implemented in reality. 
  %Such an ``experiment'' will determine the joint distribution of information $\theta_i$ and signal $s_i$. 

Second, it is worthwhile to discuss the importance of the information variable $\sigma_i$ only as a belief of the state $\theta_i$. This is a realistic assumption as it is generally impossible to observe all uncertainty. Moreover, this is also technically important. Suppose it is publicly known that the bidder directly observes the state realization $\theta_i$ as his information variable. Then, the auctioneer is able to tell whether the bidder is fully revealing his information variable from the inferred belief of the state $\theta_i$. This setup gives rise to  ``threatening'' mechanisms that  simply do not allocate the item unless the auctioneer can fully determine bidder's state. Thus, it is necessary to assume the existence of unobservable uncertainty in the system so that 
even when all bidders reveal full information, the auctioneer still only learns a belief of state --- the auctioneer does not know how much the bidder could know about the state and thus \emph{not} tell whether the bidder is fully revealing or not.
% $V(\cdot || (\theta_i,\theta_{-i}))$\footnote{In the following paper, we will omit the parenthesis of $(\theta_i,\theta_{-i})$ in $V$ to simplify the notation.}, with randomness inherited from the uncertainty variable $\theta_0$.  

% \jw{to discuss the commitment}

Third, we  discuss the implementation of experiments in practice and its implication to the  mechanism design.
% A signaling scheme is  an ``experiment'' or ``data-generating process'' that generates signals/data about the hidden unobservable random variable $\states$ \cite{blackwell1953equivalent,kamenica2019bayesian}. 
An ``experiment'' or ``data generating process'' generates signals about the hidden random variable $\states$\cite{blackwell1953equivalent,kamenica2019bayesian}. 
%As mentioned, it is different from signaling schemes  in the sense that it does \emph{not} require knowing the source of information. 
% Specifically, our model deals with two experiments $\tau_i, \lambda_i$ for each bidder $i$. The bidder $i$ directly observes the outcome $\sigma_i \sim \tau_i$ without having anyone follow a signaling scheme that draws $\sigma_i$ conditioned on the observation of $\theta_i$. Similarly, 
It is crucial in our model that the auctioneer directly observes  signal $s_i \sim \lambda_i$  as a distribution over states without the need of any knowledge about the underlying signaling scheme that the bidder may implement to generate the signal.
In practice,  bidder $i$ can implement an experiment by designing software that reveals whether the Internet user visited his websites in the past week or not. Then, depending on the answer to be \texttt{YES} or \texttt{NO}, the auctioneer can update her posterior belief about the CVRs, e.g., through Bayes updates or estimated by a machine learning algorithm using this additional feature. 
Importantly, the signal $s_i$ here --- which is a \texttt{YES} or \texttt{NO} --- does not tell how fine-grained bidder $i$'s true belief on $\theta_i$ is. That is, a \texttt{YES} signal does not tell whether bidder $i$ only has the coarse information about the presence of the Internet user in the past week or he actually has more detailed records of each visit. Consequently, while the auctioneer can interpret any signal, she cannot distinguish whether a bidder revealed all he knows or withheld some information. This subtle point is important in our model --- just as important as in the classic setup to assume that the auctioneer cannot distinguish whether a bidder misreported his type variable or not.  Otherwise, again, the auctioneer could  ``threaten'' the bidder by allocating nothing to the bidder unless the bidder reveals full information. 
\section{The Basics of Mechanism Design with \newmodel{} }\label{sec:basics}

\subsection{Mechanisms for \newmodel{}} 

With \newmodel, the input to a mechanism is a type profile $\bids$ and a signal profile $\s$.\footnote{The later is a key difference from classic auction design, in which the designer has a deterministic bidder value profile given any type report. This difference is due to the use of partial information revelation signaling schemes that lead to uncertainty and randomness. }  Therefore, a mechanism  for any type profile $\bids$ and signal profile $\s$ can be described by a pair $(x, p)$, where $x: T \times \Delta(\Theta) \to \RR^n_{+}$ is the allocation function, $p : T \times \Delta(\Theta) \to \RR^n_{+} $ is the payment function. For single item allocation, $\sum_{i\in[n]} x_i(\bids, \s ) \leq 1$. 
% , where: \forall \bids\in T, \s\in \Delta(\Theta)
% \begin{itemize}
%     \item $x$ is the \emph{allocation function} that maps any $\bids, \s$  to a vector of allocation probabilities $x(\bids, \s) = (x_1(\bids, \s ), \dots, x_n(\bids, \s))$ 
%     satisfying $\sum_{i\in[n]} x_i \leq 1$ and $x_i\geq 0$; 
%     \item $p$ is  the \emph{payment function} that maps any $\bids, \s$  to a vector of payments  $p(\bids, \s) = (p_1(\bids, \s), \dots, p_n(\bids, \s )).$
% \end{itemize} 
% \hf{Write this as an algorithm environment? The list (1), (2)... is used too often here. }
The mechanism under \newmodel{} follows the timing and procedures below:
\begin{enumerate}

% \item \textit{Ex Ante:} The auctioneer and bidders learn the  \emph{prior distribution} $g(\states), h(\types)$ as well as the value function $\v(\types || \states)$ that describes how each bidder's private type and information is correlated with his value. While the bidder knows the correspondence between $\v(\types || \states)$ and $\states$, the auctioneer does not.
\item \textit{Ex ante:} The auctioneer and bidders learn the  \emph{prior distribution} $g(\states), h(\types)$ as well as the value function $\v: T\times \Theta \to \RR^n$. 
\item \textit{Auctioneer commitment:} The auctioneer commits to a mechanism $(x,p)$ which maps any input $(\bids,  \s)$ to the allocation and payment.   
\item \textit{Bidder Commitment:} Each bidder $i$ commits to an experiment $\lambda_i\in \Delta(\Delta(\Theta))$ to reveal information about the state $\theta_i$.
\item \textit{Interim:} As an item arrives, $\sigma_i, t_i$ are realized for each bidder $i$. Bidder $i$ reports his type as bid $b_i$ to the auctioneer. Then a signal $s_i$ is generated by $\tau_i$ and observed by the auctioneer. 
\item \textit{Ex post:} The auctioneer receives the type profile $\bids$, signal profile $\s$ and determines allocation and payment $x(\bids;\s), p(\bids, \s)$ according to the committed mechanism.
\end{enumerate}
Two obvious differences between the above mechanism procedure and classic mechanisms are the following: (1) The classic models do not have Step 3; (2) In Step 4, classic models ask each bidder to report a type variable in the support whereas our model additionally elicits the information signal from the bidder. {We also assume for convenience that each bidder $i$ reports his bid $b_i$ before observing the realization of $s_i$ from $\lambda_i$. This is without loss generality for the design of truthful mechanism for \newmodel{}: if a misreport of $b_i$ at certain $s_i$ is beneficial, then truthfulness must not hold in general, e.g., when $\tau_i$ is a point distribution on such $s_i$. Hence, we can simply think of the bidder's strategy as to pick a $b_i$ based on $t_i$ regardless of the outcome $s_i$ of his experiment. } 

\subsection{Truthful Mechanisms and the Revelation Principle}

Similar to classic mechanism design, we will be interested in the design of \emph{truthful mechanisms} which has a slightly richer meaning here. In particular, our mechanism to provide incentives for truthfulness in both type reporting and information revelation of \newmodel{}. This leads to our study of a truthfulness notion, coined as \emph{Incentive and Information Compatibility} (\infoIC{}). 

%Similar to the classic auctions, we define the obedience and rationality requirements for a mechanism to be truthful: (1) \emph{Incentive and Information Compatibility} (\infoIC{}) ensures each bidder's best interest to truthfully report his type and fully reveal his state.  (2) \emph{Individual Rationality} (IR)  ensures each bidder's best interest for truthful participation. 

% \vspace{9mm}

We start by deriving bidders' utility functions.  
According to the mechanism for \newmodel{} described above, with the reported type profile $\bids$ and realized signal profile $\s$, given the true type $t_i$, we denote the utility of bidder $i$ as,
% $ 
% \texttt{U}_i (\bids; \s) \coloneqq x_i( \bids; \s ) \cdot v_i( t_i; \theta_i) - p_i(\bids; \s). 
% $  
% More often, the bidders would care only their utility under some realized signal profile $\s$ as follows:
\begin{equation} \label{eq:util}
\util (\bids; \s) \coloneqq x_i( \bids, \s ) \cdot v_i( t_i, \s) - p_i(\bids, \s). 
\end{equation} 
% \jw{Should we hide $t_i$ as well and denote the utility function as $\util (\bids; \s)$? since it is intuitive throughout the paper that $t_i$ is the true type, $b_i$ may be any type.} \hf{This should be okay, that subscript $i$ in $U_i$ implicitly encodes $t_i$ already. }
where $v_i( t_i, \s) = \Ex_{\states \sim \s}  v_i(b_i, \theta_i)$ is the expected value over randomness of $\states$, conditioned on $\s$. A particularly nice situation is when a bidder fully reveals his information variable $ \sigma_i $. In this case, $i$'s value is fully determined by $\sigma_i, t_i$, and the above utility becomes $ \exputil (\types; \sigma_i, s_{-i})  = x_i( \types; \sigma_i, s_{-i} ) \cdot v_i( t_i ; \sigma_i) - p_i(\types ; \sigma_i, s_{-i})$, where the bidder $i$'s value $v_i( t_i ; \sigma_i)$ is independent of the other bidders' information $s_{-i}$. 

In addition, recall from equation \eqref{eq:joint-scheme}, the joint signaling scheme determines a distribution $\lambda$  over $\s$, we  use the following notation to denote the ultimate expected bidder $i$ utility, under bidding profile $\bids$ and the aggregated experiment $S$: 
% with other bidders are truthful, $\bids=(t_i, t_{-i}), \s=(s_i,\states_{-i})$
% $$ \Ex_{\s} \Ex_{(\theta_i,\theta_{-i}) \sim \s} \texttt{U}_i (\bids, \s; t_i,\theta_i ) = \sum_{\theta} f(\theta) \cdot \sum_{s} \prod_{i\in [n]} \pi_i(s_i;\theta_i)  \texttt{U}_i (\bids, \s; t_i,\theta_i  ) $$
% $$ \Ex_{s_i} \Ex_{\theta_i \sim s_i} \texttt{U}_i (\bids, \s; t_i,\theta_i ) = \sum_{s_i} \lambda_i(s_i) \cdot \sum_{\theta_i} \Pr(\theta_i | s_i)  \texttt{U}_i (\bids, \s; t_i,\theta_i  ) $$
\begin{equation} \label{eq:exp-util}
 \exputil (\bids; \blambda) \coloneqq     \sum_{\s} \blambda(\s) \cdot   \util (\bids; \s). 
\end{equation}
% \jw{why is it wlog. to fix a $\bids$ despite the signal is randomly drawn. Shall we consider the misreporting strategy that conditions on the realized information variable. }

% if all bidders are truthful that the reported signal profile is $(\theta_i,\theta_{-i})$, 
% \hf{This is unnecessarily verbose sentence, just say "if all bidders fully report the signal profile   $(\theta_i,\theta_{-i})$"}

To accommodate the randomness of state realization, it is reasonable for each bidder $i$ to focus on their expected utility $\exputil$ under the experiment $\lambda_i$ instead of each realized signal. Hence, the \infoIC{} is to guarantee that reporting the true type and fully revealing experiment $t_i, \tau_i$ maximizes each bidder's expected utility.
In contrast, we will use the utility in equation \eqref{eq:util} instead of the expected utility in equation \eqref{eq:exp-util} for IR constraints, since we want to ensure truthful participation would receive non-negative payoff at each realization of signal.

% We start with the solution concept that full information revelation is an ex-post Nash Equilibrium. 
 
% \jw{the solution concepts can be generalized to the new mechanism if we require the exact same condition to hold for a given prior state distribution $g(\states | \v)$. }
% \jw{not sure if it is better to just use the utility function, instead of writing it out in the following definitions? }
\begin{definition}[Ex-post \infoIC{} and ex-post IR] \label{def:expost} 
A mechanism $(x,p)$ is ex-post \infoIC{} if for every bidder $i$ with any type $b_i, t_i\in T_i$ and experiment $\tau_i \mps \lambda_i$, given other bidders' truthful report $t_{-i}$ and fully revealing experiment $\tau_{-i}$,
\begin{equation}
\exputil (t_i, t_{-i}; \tau_i, \tau_{-i}) \geq \exputil (b_i, t_{-i}; \lambda_i, \tau_{-i})   
\label{eq:inc-any-scheme}
\end{equation}

Similarly, the mechanism is ex-post IR if it satisfies for every bidder $i$, for any {fully revealing signal profile} $(\sigma_i, \sigma_{-i})$ and {truthful type profile} $\types$,
\begin{equation*}
\util (t_i, t_{-i}; \sigma_i, \sigma_{-i}) \geq 0
\end{equation*}

\end{definition}
That is,  participating, reporting true type and revealing full information is a Bayes-Nash equilibrium of the corresponding game in the ex-post stage, i.e., no bidder strictly prefers to partially reveal his information variable or misreporting his type after seeing all other bidders' fully revealing experiments and truthfully reported types. 

\begin{definition}[Dominant-strategy \infoIC{}] 
A mechanism $(x,p)$ is dominant-strategy \infoIC{} if  for every bidder $i$ with any type $b_i, t_i\in T_i$ and experiment $\tau_i \mps \lambda_i$, given any possibly untruthful type profile $b_{-i}$, and possibly partial revealing experiments $\lambda_{-i}$,
\begin{equation*}
\exputil (t_i, b_{-i}; \tau_i, \lambda_{-i}) \geq \exputil (b_i, b_{-i}; \lambda_i, \lambda_{-i})   
\end{equation*}
\end{definition}
That is, reporting true type and revealing full information is a dominant-strategy equilibrium of the corresponding game. The dominant-strategy \infoIC{} is stronger than ex-post \infoIC{} condition. However, dominant-strategy \infoIC{} cannot be expected in general due to a similar reason as in the classic setup -- one bidder’s partial revealing behavior could cause the auctioneer to overcharge a different bidder.  
More details are discussed later in Example \ref{ex:dominant-fail}. We will primarily focus on ex-post \infoIC{}, and thus omit ``ex-post'' unless otherwise specified. 

% \jw{to explain the subtle difference in the two IIC definition}
% \jw{to definite a universal IR condition, since the auctioneer does not know what a fully revealing signal would be.}
% {We also remark that for both notions of \infoIC{}, it does not impose any assumption on how much information a bidder should reveal --- in fact, the auctioneer cannot tell what the bidder could know. Instead, the \infoIC{} constraints simply guarantees that the more information a bidder reveals, the better his utility could be. This is a subtle difference from the classic notion of truthfulness, where the reported value is only expected to match with the true value. In another word, the information compatibility of \infoIC{} mechanisms can interpreted as to incentivize the bidders to collect and share as many useful features as possible with the auctioneer. }

% so we will omit ex-post and directly use \infoIC{} and IR for simplicity. 

In this paper, we say a mechanism is \emph{truthful}, if it satisfies ex-post \infoIC{} and IR. This extends the meaning of \emph{truthfulness} in the classical sense, as here the strategically reticent bidders must not only report their true type variables but also fully reveal their information variables. The revelation principle \cite{roughgarden2010algorithmic} also holds under the \newmodel, so the mechanism designer can without loss of generality focus on designing truthful mechanism. %, where the bidders reveal full information and report their true types. %\hf{You have not defined \infoIC{} yet.}

\begin{theorem} [Revelation Principle]\label{thm:rev-principle}
Under the \newmodel, for any mechanism that implements a Bayes-Nash equilibrium, there always exists another truthful (in ex-post sense), direct mechanism that implements this Bayes-Nash equilibrium. 
% \hf{Do you need to emphasize "Bayes" Nash equilibrium here?}
\end{theorem} 

The proof of the above revelation principle is standard and thus is deferred to Appendix \ref{append:mech-char}.

% \begin{lemma}[Decomposition of \infoIC{}]
% A mechanism $\M=(x,p)$ is ex-post \infoIC{} if and only if both of the following conditions are satisfied:
% \begin{enumerate}
% \item 
% $$
% \util(t_i,t_{-i}, \sigma_{i},\sigma_{-i}; t_{i})  \geq \util(b_i,t_{-i}, \sigma_{i},\sigma_{-i}; t_{i})
% $$
% \item 
% $$
% \Ex_{\sigma_{i} \sim \tau_{i}}[
% \util(b_i,t_{-i}, \sigma_{i},\sigma_{-i}; t_{i}) ] \geq \Ex_{s_{i} \sim \lambda_{i}}[\util(b_i,t_{-i}, s_{i},\sigma_{-i}; t_{i})]
% $$
% \end{enumerate}
% \end{lemma} 
% \begin{proof}
% $$
%  \exputil(t_i,t_{-i}, \tau_{i},\tau_{-i}; t_{i})
%   \geq 
%   \exputil(t_i,t_{-i}, \lambda_{i},\tau_{-i}; t_{i})
%   \geq 
%   \exputil(b_i,t_{-i}, \lambda_{i},\tau_{-i}; t_{i})
% $$
% $$
% \Ex_{\sigma_{-i}\sim \tau_{-i}} \Ex_{s_{i} \sim \tau_{i}}[
% \util(t_i,t_{-i}, \sigma_{i},\sigma_{-i}; t_{i}) ] \geq \Ex_{\sigma_{-i}\sim \tau_{-i}} \Ex_{s_{i} \sim \lambda_{i}}[\util(t_i,t_{-i}, s_{i},\sigma_{-i}; t_{i})]
% $$
% \end{proof} 

\section{The \emm{}}\label{sec:wel-max}
Even though the revelation principle allows us to focus only on truthful mechanisms without loss of generality, designing optimal truthful mechanisms for \newmodel{} turns out still be a difficult task due to the extra signal information that the mechanism should count on.  Since  classic mechanism design  for \oldmodel{}  is already well-understood, to avoid reinventing the wheel, we will build our mechanisms upon previously known ones through reductions. Formally, we will introduce \emph{meta mechanisms}, which take  some base mechanism $\M$ of \oldmodel{} as input and accordingly produces a mechanism (i.e.,  allocation and payment) for our \newmodel{}. Furthermore, we assume the base mechanism $\M$ may be prior dependent, and thus its payment and allocation function takes the input of a reported value profile $\v$ and the prior $F = \{ F_i \}_{i\in [n]}$ of each bidder's value distribution. 
In the current and next section, we will show there does exist a simple meta mechanism that almost perfectly reduces mechanism design with \newmodel{}   to the design in a classic setting. 

We first introduce the \emm{}, which is a natural generalization of mechanisms intended for \oldmodel{}. As shown in the Meta Mechanism \ref{al:expected-meta}, it essentially uses the expected bidder value to determine the allocation and payment according to the input mechanism $\M$. The expected value distribution $F^{\s}$ captures the cumulative probability of the expected value under the signal profile $\s$, and we will see its usage in a concrete example in Section \ref{sec:expected-myerson}.

\begin{algorithm}[h]
    \SetAlgorithmName{Meta Mechanism}{mechanism}{List of Mechanisms}
    % \floatname{algorithm}{Mechanism}
    \caption{The \emph{Expected} Meta Mechanism }
    \label{al:expected-meta}
% 	\SetAlgoNoLine
    \SetAlgoLined
	\KwIn{The bid profile $\bids$, signal profile $\s$. }
	\KwIn{The mechanism $\M = (x,p)$ designed for \oldmodel{}.}
	\KwOut{The allocation and payment of the expected mechanism over $\M$ for \newmodel{}}
	Compute the expected value profile, 
	$$ v(\bids; \s) =\sum_{\states \in \Theta}  \v(\bids; \states) \Pr( \states|\s), $$
    and the expected value distribution over type for all bidders,
    $$ F^{\s} \coloneqq \left\{ F_i^{\s}(v) = \int_{v_i( b_i; \s) \leq v } h_i(b_i) \mathrm{d} b_i  \right\}_{i\in [n]}. $$    
    Use the expected value profile and distribution to determine the allocation and payment 
    $$
    \bar{x}(\bids;\s) = x( v(\bids; \s), F^{\s}) \qquad  \text{ and } \qquad \bar{p}(\bids;\s) = p( v(\bids; \s), F^{\s}).
    $$
    
    Return the allocation and payment $\bar{x}(\bids;\s), \bar{p}(\bids;\s)$ for \newmodel{}.
\end{algorithm}

For the \emm{}, we assume the base mechanism $\M$ is feasible for any value profile $\v \in \{\v(\bids, \s): \bids\in T, \s \in \Delta(\Theta) \}$ --- most mechanisms are indeed designed to work with any given value profiles. This ensures the allocation and payment functions of $\M$ to be well defined for the expected value profile under any reported type and signal profile.
Moreover, the IC and IR constraints of $\M$ are satisfied, for any bidder $i\in [n]$,~\footnote{Note that the dominant-strategy IC and ex-post IC are equivalent here, since the value of each bidder in our model is independent of other bidders' types. Thus, in this paper, we will simply use IC to refer to the equivalent condition about the base mechanism $\M$.} 
\begin{align}\tag{\text{IC}}
    x_i ( v_i, v_{-i} )  \cdot v_i - p_i (v_i, v_{-i})   
  & \geq   x_i (v'_i, v_{-i} )  \cdot v_i - p_i (v'_i, v_{-i}) 
  &  \forall v'_i \neq v_i, v_{-i},  \\ \tag{\text{IR}}
   x_i (v_i, v_{-i} )  \cdot v_i - p_i (v_i, v_{-i})   
  & \geq   0   
  & \forall v_i, v_{-i}
\end{align}
%  
% Such assumption is natural and ubiquitous, as most mechanisms are designed for any value profiles.  
Our next result provides a black-box reduction from \infoIC{} and IR mechanism for \newmodel{} to IC and IR mechanism $\M$ for \oldmodel{}, as long as $\M$ satisfies the following condition.  

\begin{definition}[Proper Mechanism]
A mechanism $\M=(x,p)$ for \oldmodel{} is proper, if any truthful bidder $i$'s utility, $ x_i ( v_i, v_{-i} )  \cdot v_i - p_i (v_i, v_{-i}), $ is convex in $v_i$, for any fixed $v_{-i}$.
\end{definition}

This notion draws inspiration from the proper scoring rule in literature of information elicitation~\cite{savage1971elicitation}. It is well-known that proper scoring rule guarantees that a forecaster's utility is maximized at full revelation of information state. Similarly, the proper mechanism here guarantees that a bidder's utility is maximized at full revelation of his realized value.

\begin{theorem}\label{thm:expected-meta-reduce}
An \emm{} $\bar{\M}$ is ex-post \infoIC{} and IR if and only if its base mechanism $\M$ is proper, IC and IR.
\end{theorem}
\begin{proof}
It is straightforward to verify that at full revelation, the IR condition of $\bar{\M}$ is equivalent to the IR condition of $\M$ by setting the value profile $(v_i, v_{-i}) = \v(\types; \bsigma)$. Hence, the \emm{} $\bar{\M}$ is IR if and only if its base mechanism $\M$ is IR.

It remains to show the \emm{} $\bar{\M}$ is ex-post \infoIC{} if and only if its base mechanism $\M$ is proper and IC.
% $$
% \exputil(t_i,t_{-i}; \tau_{i},\tau_{-i}) \geq
%  \exputil(b_i,t_{-i}; \tau_{i},\tau_{-i})  = \exputil(b_i,t_{-i}; \lambda_{i},\tau_{-i})   
%   \quad \forall b_i \in T_i, \tau_{i} \mps \lambda_{i}
% $$

% there is some variable $v_i$ independent of the type variable 
% We simply need to consider the cases when the type variable or information variable degenerates in some value function $v_i$.  

We start with the "if" direction, and it suffices to show that the following two inequalities holds, respectively, for any bidder $i\in [n]$,
\begin{equation}
 \label{eq:expected-iic-decompose}
 \exputil(t_i,t_{-i}; \tau_{i},\tau_{-i})
  \geq 
  \exputil(t_i,t_{-i}; \lambda_{i},\tau_{-i})
  \geq 
  \exputil(b_i,t_{-i}; \lambda_{i},\tau_{-i}),
  \quad \forall b_i \in T_i, \tau_{i} \mps \lambda_{i}, 
\end{equation}
where $(t_i, t_{-i})$, $(\tau_{i}, \tau_{-i})$ are the true type profile and fully-revealing experiments. 

To ease the proof of the inequalities, we first rewrite the above terms of expected utility by introducing three random variables $V^{\tau}_i, V^{\lambda}_i, V^{\tau}_{-i}$. From any experiment $\tau_i \mps \lambda_i$ and the other bidder's fully-revealing signal $\sigma_{-i}$,  we let $V^{\tau}_i = v_i(t_i; \sigma_{i})$ with probability $\tau_i(\sigma_i)$, $V^{\lambda}_i = v_i(t_i; s_i, \sigma_{-i})$ with probability $\lambda_i(s_i)$ and $V^{\tau}_{-i}=v_{-i}(t_{-i}; \sigma_{-i})$ with $\tau_{-i}(\sigma_{-i})$.~\footnote{For notational convenience, we will simply use $s_i$ to represent the uncertainty on $\theta_i$ under the joint information of both $s_i$ and  $\sigma_{-i}$ in several proofs of \infoIC{} through this paper --- since it is equivalent to think of that the bidder $i$ directly reveals the additional information on $\theta_i$ from its $s_i$, given that the auctioneer's belief of $\theta_i$ is the same.} 
Clearly, $V^{\tau}_i \mps V^{\lambda}_i$, as we can verify that $\forall s_i, \sum_{\sigma_i} \Pr(\sigma_i|s_i) v(t_i;\sigma_i) = \sum_{\sigma_i} \Pr(\sigma_i|s_i)\Pr(\theta_i|\sigma_i) v(t_i;\theta_i) = \Pr(\theta_i|s_i) = v(t_i;\theta_i) = v(t_i;s_i)$ and thus $\Ex[V^{\tau}_i | V^{\lambda}_i] = V^{\lambda}_i$.
As the remaining bidders are fully-revealing their information variable $\sigma_{-i}$, the realization of their value profile $v_{-i}(t_{-i}; \theta_{-i})$ are independent of the bidder's reported type or signal, due to the conditional independence assumption. Recall that $u_i ( v_i, v_{-i} )$ is the truthful bidder $i$'s utility in $\M$.
Hence, the bidder $i$'s expected utility can be equivalently written as, 
$$\exputil (t_i, t_{-i}; \tau_i, \tau_{-i}) = \Ex_{ v_{-i}\sim V_{-i}^{\tau}} \Ex_{ v_i\sim V_i^{\tau}} [ x_i ( v_i, v_{-i} )  \cdot v_i - p_i (v_i, v_{-i}) ],$$ 
$$\exputil (t_i, t_{-i}; \lambda_i, \tau_{-i}) = \Ex_{ v_{-i}\sim V_{-i}^{\tau}} \Ex_{ v_i\sim V_i^{\lambda}} [ x_i ( v_i, v_{-i} )  \cdot v_i - p_i (v_i, v_{-i}) ].$$ 
In addition, we introduce the mapping $\eta: \RR \to \RR$ from the expected value under true type $v_i(t_i; s_i, \sigma_{-i})$ to the expected value under reported type $v_i(b_i; s_i, \sigma_{-i})$. So the expected utility under possibly misreported type $b_i$ can be expressed as,
$$\exputil (b_i, t_{-i}; \lambda_i, \tau_{-i}) = \Ex_{ v_{-i}\sim V_{-i}^{\tau}} \Ex_{ v_i\sim V_i^{\lambda}, v'_i=\eta(v_i)} [ x_i ( v'_i, v_{-i} )  \cdot v_i - p_i (v'_i, v_{-i}) ].$$

By removing their common source of randomness, the inequalities in \eqref{eq:expected-iic-decompose} can be implied by the following two inequalities:
\begin{equation}\label{eq:expected-difference-1}
 \Ex_{v_i\sim V^{\tau}_i} [ x_i ( v_i, v_{-i} )  \cdot v_i - p_i (v_i, v_{-i}) ]
\geq  
\Ex_{v_i\sim V^{\lambda}_i} [ x_i ( v_i, v_{-i} )  \cdot v_i - p_i (v_i, v_{-i}) ], \quad \forall v_{-i},
\end{equation}
\begin{equation}\label{eq:expected-difference-2}
  x_i ( v_i, v_{-i} )  \cdot v_i - p_i (v_i, v_{-i})  \geq   x_i ( v'_i, v_{-i} )  \cdot v_i - p_i (v'_i, v_{-i}), \quad \forall v_{i}, v_{-i}.   
\end{equation} 
% \begin{equation}\label{eq:expected-true-report}
%  \Ex_{v_i\sim V^{\lambda}_i} [ x_i ( v_i, v_{-i} )  \cdot v_i - p_i (v_i, v_{-i}) ] \geq  \Ex_{ v_i\sim V_i^{\lambda}, v'_i\sim \eta(v_i)} [ x_i ( v'_i, v_{-i} )  \cdot v_i - p_i (v'_i, v_{-i}) ], \quad \forall v_{-i}.   
% \end{equation} 

Inequality \eqref{eq:expected-difference-1} is due to  Theorem~\ref{thm:blackwell} of \citet{blackwell1953equivalent}, as bidder $i$'s utility is convex under the proper mechanism $\M$, and $V^{\tau}_i \mps V^{\lambda}_i$.
The inequality in \eqref{eq:expected-difference-2} is directly implied by the IC of $\M$.

Lastly, the "only if" direction is much easier to see and we defer formal arguments to Appendix~\ref{append:expected-vickrey}. At a high level, it is to construct the extreme cases where inequalities in \eqref{eq:expected-difference-1} and \eqref{eq:expected-difference-2} becomes necessary, since \eqref{eq:expected-difference-1} is exactly the IC of $\M$ and \eqref{eq:expected-difference-1} holds only if $\M$ is proper by Theorem~\ref{thm:blackwell}.  \qedhere
% This concludes the proof that $\bar{\M}$ is ex-post IIC and IR.

\end{proof}

% \begin{remark}
% \jw{to rewrite}
% The notion of FD is useful for the black-box reduction of \newmodel{} to \oldmodel{}, as we will reuse the Lemma  \ref{thm:expost-meta} in the proof of Theorem \ref{thm:simulated-meta-reduce} for a different class of meta mechanism. However, note that the equivalence of \infoIC{} to FD and IC in Lemma  \ref{thm:expost-meta} only apply to the ex-post or any weaker setting. It is not true that dominant-strategy FD and IC imply dominant strategy \infoIC{}, because the bidders may not choose to report their type truthfully if their opponents are not fully revealing. Essentially, the name of dominant-strategy IC constraint is deceptive here, as its dominance is only meant for the \oldmodel{} without partial revelation power. We defer more details to Section \ref{sec:regulate}, where we show that the equivalence can be restored under certain regulation regarding the use of information. 
% \end{remark}

% \jw{to define maximum welfare}

\subsection{The Expected Vickrey Auction for Welfare Maximization}

A fundamental result in classic auction setup is that the second price mechanism guarantees dominant-strategy truthfulness and maximizes the welfare at equilibrium. It turns out that both the bidder's utility and the auctioneer's objective in Vickrey auction have some convexity structure that favors the certainty of information. As such, combined with the \emm{}, the Vickrey auction suffices to achieve maximum welfare for the \newmodel{}. 
We defer the proof of Proposition \ref{thm:truthfulness-private-value} and full specification of the Expected-Vickrey Auction to Appendix \ref{append:expected-vickrey}.

\begin{proposition}\label{thm:truthfulness-private-value}
Truthfulness forms a Bayes NE in ex-post stage of Expected-Vickrey Auction for \infobid{}, which achieves maximum welfare.
\end{proposition}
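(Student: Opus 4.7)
The plan is to derive this proposition as a direct corollary of Theorem~\ref{thm:expected-meta-reduce}, instantiated with the base mechanism chosen to be the classic Vickrey (second-price) auction, and then read off the welfare guarantee from the allocation rule.

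First I would verify the hypotheses of Theorem~\ref{thm:expected-meta-reduce} for this choice of base mechanism. The classic second-price auction is textbook dominant-strategy IC and ex-post IR over every value profile in the simplex, so only the convexity condition needs nontrivial work. I would observe that with everyone else's reports and signals fixed, bidder $i$'s payoff in Mechanism~\ref{al:expected-2nd} equals $x_i \cdot \bar{v}_i(t_i \| \theta_i) - p_i$, which can be written as a pointwise maximum over a finite list of candidate outcomes (for each $j \in [n]$, allocate to $j$ and charge the runner-up expected value, or allocate to none and pay zero) that are each affine in the auctioneer's inferred expected value vector, and hence convex in $\s$ after composing with the Bayesian inference mapping. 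Applying Theorem~\ref{thm:expected-meta-reduce} then yields that Mechanism~\ref{al:expected-2nd} is \infoIC{} and IR, so in particular truthful type reporting together with full revelation is a Bayes--Nash equilibrium.

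Second I would establish the welfare guarantee by direct inspection of the allocation at this equilibrium. When every bidder reports $b_i = t_i$ and signals the point mass on the realized $\theta_i$, the signal profile reduces to $\s = (\theta_1, \dots, \theta_n)$. By the independence of $\theta_0$ from the bidder-held information (Section~\ref{sec:model}), the Bayesian posterior~\eqref{eq:updates-state} collapses to a point mass over $\states$ (modulo the exogenous $\theta_0$), and the auctioneer's computed value satisfies $\bar{v}_i(t_i \| \s) = \bar{v}_i(t_i \| \theta_i)$. The Expected-Vickrey allocation rule then assigns the item to $\argmax_i \bar{v}_i(t_i \| \theta_i)$, which by definition is the maximum achievable welfare $\max_i \bar{v}_i(t_i \| \theta_i)$ of Section~\ref{sec:basics}.

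The main obstacle I expect is cleanly justifying the convexity hypothesis, since~\eqref{eq:updates-state} is not literally linear in $\s$ when bidder states are correlated, so convexity must be established through the inferred value coordinates rather than the raw simplex coordinates. A careful argument should represent each bidder's utility as a maximum over the finite list of allocation/payment outcomes that are affine in the auctioneer's inferred value vector, and then use the fact that only $\bar{v}(\bids \| \s)$ (not $\s$ itself) feeds into the mechanism to transfer convexity back to $\Delta_\Theta$. Once this step is pinned down, the remainder is a one-line plug-in into Theorem~\ref{thm:expected-meta-reduce} together with the welfare computation above.
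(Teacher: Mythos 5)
Your proposal is correct and follows essentially the same route as the paper: the paper likewise invokes Theorem~\ref{thm:expected-meta-reduce} by verifying that, with the other bidders truthful, bidder $i$'s Vickrey utility is $\max\{v^*, \bv_i(t_i \| \theta_i)\} - v^*$ for a fixed runner-up value $v^*$ (a maximum of affine functions of the inferred expected value, hence convex over $\Delta_\Theta$), and then reads off maximum achievable welfare from the allocation at fully revealing signals. Your added care about where the convexity lives (in the inferred value coordinates rather than raw simplex coordinates) is a reasonable sharpening of a point the paper treats briefly, but it does not change the argument.
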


We conclude with two remarks about Proposition \ref{thm:truthfulness-private-value}. First, the value profile $\v$ for \newmodel{} depends on $\states$, that is not observable by any agent. Therefore, it is  information-theoretically impossible to determine the bidder with the highest value, i.e., $\max_{i}v_i(t_i, \theta_i)$. Instead, the maximum achievable welfare, given $\types, \bsigma$, is the largest expected value $\max_{i}v_i(t_i, \sigma_i) = \max_{i}\Ex_{\theta_i\sim \sigma_i }v_i(t_i, \theta_i)$. In another word, The ``maximum welfare'' in Proposition \ref{thm:truthfulness-private-value} is with respect to the expected value  $v_i(t_i;\theta_i )$. Second,  while Proposition \ref{thm:truthfulness-private-value} shows that \emm{} can preserve the optimality of the Vickrey auction for the welfare objective, this can be easily generalized to preserve the optimality of any mechanism (satisfying conditions of Theorem \ref{thm:expected-meta-reduce}) for any objective convex in $\v$, which will be maximized upon full information revelation.

\subsection{Sub-Optimality of Expected-Myerson for Revenue Maximization} 
\label{sec:expected-myerson}
% \subsection{Sub-Optimality in Revenue Maximization} 
% \hf{Change to "Sub-Optimality of Expected Myerson for Revenue Maximization"}
An  interesting corollary of Proposition \ref{thm:truthfulness-private-value} is that truthfulness can also be guaranteed for the second price auction with any reserve price. This is because we can simply view the reserve price as an additional bidder and the same proof would still apply. Consequently, if bidders are symmetric in our model --- all bidders have identical value function, state and type distribution --- then the Expected-Myerson Auction is reduced to the Expected-Vickrey Auction with optimal reserve~\cite{myerson1981optimal}, and thus is also truthful.  Moreover, since the revenue objective is also convex in $\Delta_\Theta$~\cite{fu2012ad}, the maximum revenue can be achieved at bidders' full information revelation.  
% \jw{Rewrote a bit, please check}

\begin{corollary}
When bidders are symmetric,
truthfulness forms a Bayes NE in ex-post stage of Expected-Myerson Auction for \infobid{}, which achieves maximum revenue.
\end{corollary}

% In the meantime, we remark that, however simple and powerful is the \emm{}, it fails to achieve another important objective of mechanism design --- revenue maximization. Although the revenue objective is also convex in $\Delta_\Theta$, the utility function given by the Myerson's optimal auction is not convex, so Theorem \ref{thm:expected-meta-reduce} cannot apply. 
Interestingly,  the Expected-Myerson Auction is generally not  revenue-optimal  because the bidder's utility function is no  longer convex, so Theorem \ref{thm:expected-meta-reduce} cannot apply. 
In particular, the following example illustrates how truthfulness may be violated for asymmetric bidders and consequently the expected-Myerson auction cannot be revenue-optimal in general. 

% \jw{An counterexample here turns out to be tricky. In symmetric case where it is reduced to second price auction with optimal reserve, the expected-myerson is indeed revenue optimal. In asymmetric case, the truthfulness may be violated if can find some distribution $F^{\states}(\cdot)$, along with second highest virtual value $p$, such that 
% % $$ \Ex_{\states \sim \s} \big[ (1-F^{\s}(\phi^{-1, \s}(p)))  (v(\states) - \phi^{-1, \s}(p) ) \big] > \Ex_{\states \sim \s}\big[ (1-F^{\states}(\phi^{-1, \theta}(p)))  (v(\states) - \phi^{-1, \theta}(p) ) \big]. $$ 
% % We can see that $F^{\s}(p) \geq  \Ex_{\states \sim \s} F^{\theta}(p) $ so it suffice to have $\phi^{-1, \s}(p) < \Ex_{\states \sim \s} \phi^{-1, \theta}(p) $.
% } \hf{Why it is so complicated? I thought you can even just come up with a concrete instance with say 3 bidders, and they do calculation by hand to see whether it is optimal, compared to simulated one? I do not think any abstract formula is needed, concrete number calculation would suffice?}
%the dominant-strategy FD would not hold. 

\begin{example}[Non-Truthfulness of the Expected-Myerson Auction]
\label{ex:expected-myerson}
Consider an auction with two bidders. For bidder 1, his type distribution follows $h_1(t)=\frac{2}{t^3}, H_1(t)= 1-\frac{1}{t^2}$ with $t\in[1,\infty)$, and the value function at different state $\theta_1 = \{0, 1\}$ is $v_1(t;0)=t, v_1(t;1)=t^2$ respectively. 
Accordingly, we can compute bidder $1$'s virtual value function at each state as, $\phi_1(t; 0)=\frac{t}{2}, \phi_1(t;1)=0$,\footnote{The formal definition of this notation will be introduced in Section \ref{sec:simulated-myerson}.}
so both value distributions are regular. Meanwhile, we let the type and value function of bidder $2$ be fixed such that his virtual value is always $1$.

Suppose $\theta_1\in \{ 0, 1\}$ is a uniformly random bit, the realization of which is privately observed by bidder $1$.
If bidder $1$ always fully reveals his information, he can only win the auction at $\theta_1=0$, with $t>2$ and paying $2$. So his expected utility is $\frac{1}{2}\int_{2}^{\infty} \frac{2}{t^3} (t - 2) = \frac{1}{96} \approx 0.01$. In contrast, if bidder $1$  reveals no information, his expected value is $v_1(t;\s) = \frac{t^2+t}{2}$ virtual value is $\phi_1(t;\s) = \frac{t^2+t}{2} - \frac{4t^2\left(2t+1\right)}{16t+7}$. He can win the auction with $t>2.11$ and pay $3.2789$, and we can compute his expected utility as $\int_{2.11}^{\infty} \frac{2}{t^3} (\frac{t^2+t}{2} - 3.2789) \approx 0.065$. This means that bidder $1$ would prefer to stay silent in order to get more utility. Hence, the Expected-Myerson Auction is not truthful. 
% \jw{But unclear that this is not revenue-optimal?} 
\end{example}

\section{The \smm{}}\label{sec:rev-max}
In this section, we study a different yet equally natural meta mechanism that instead uses the expectation of outcomes.  As   described in Meta Mechanism \ref{al:simulated-meta}, it utilizes each possible realized value profile and output the expectation of their outcomes. 
Specifically, the allocation (resp. payment) is determined by a linear combination of allocation (resp. payment) at every possible realization of bidder's value profile through \emph{simulation}.
Hence, we will refer to such generalization technique as the \smm{}. 
We make the same assumption that the base mechanism may be prior dependent and thus takes the input of $\v$ and $F$. $F_i^\states(\cdot)$ denotes the bidder $i$'s value distribution for a given state $\theta_i$,
% \footnote{Note that $F_i, f_i$ are independent of $\theta_{-i}$, but we use the information profile $\states$ as the input for readability.} 
and they can be directly computed from the value function $v_i$ and the type distribution $h_i$. It turns out that this meta mechanism can also convert IC mechanism in \oldmodel{} to \infoIC{} in \newmodel{} under some minor assumptions below. 
% \vspace{-3mm}
\begin{algorithm}[h]
    \SetAlgorithmName{Meta Mechanism}{mechanism}{List of Mechanisms}
    % \floatname{algorithm}{Mechanism}
    \caption{The \emph{Simulated} Meta Mechanism }
    \label{al:simulated-meta}
% 	\SetAlgoNoLine
    \SetAlgoLined
% 	\KwIn{The posterior value profile distribution $V(\bids ; \s) $ of all $n$ bidders}
	\KwIn{The bid profile $\bids$, signal profile $\s$ }
	\KwIn{The mechanism $\M = (x,p)$ designed for \oldmodel{}}
	\KwOut{The allocation and payment of the expected mechanism over $M$ for \newmodel{}}
 
 	For each possible realization of $\states$ given $\s$, compute bidders' value profile $v( \bids ; \states)$ and their corresponding value distribution, 
 	$$
 F^{\states} \coloneqq \left\{ F_i^{\theta}( v) = \int_{\forall b_i, v_i( b_i ; \theta_i) \leq v} h_i(b_i) \mathrm{d} b_i  \right\}_{i\in [n]}
 $$

    Use the probability $\Pr(\states | \s)$ to determine the allocation and payment, 
        \begin{align*}
        \bar{x}(\bids ; \s)  =  \sum_{\states \in \Theta} \Pr(\states | \s) x\left( \v(\bids ; \states), F^{\states} \right)  \qquad  \text{ and } \qquad  
        \bar{p}(\bids ; \s)  =  \sum_{\states \in \Theta} \Pr(\states | \s) p\left( \v(\bids ; \states), F^{\states} \right) 
        \end{align*}
    
    Return the allocation and payment $\bar{x}(\bids;\s), \bar{p}(\bids;\s)$ for \newmodel{}.
\end{algorithm}

\begin{definition}[Information Sufficiency]
A mechanism $\M$ satisfies the information sufficiency condition, if for any bidder $i\in [n]$, for any fixed value profile $v_{-i}$ and type $t_i$,
$$
\Var_{\theta_i \sim \sigma_i}\big[ x_i\big( v_i(t_i; \theta_i), v_{-i} \big)  \big] = 0.
$$
\end{definition}
In words, this condition means that uncertainty of state given the information variable $\bsigma$ will not affect the mechanism's allocation (though it may affect the bidder's values). Namely, the bidders' information variables are \emph{sufficient} for the mechanism $\M$ to determine the allocation.
This is typically satisfied when $\bsigma$ does not contain too much randomness, e.g., in the fully informative case with $\Var_{\theta_i \sim \sigma_i}\big[ v_i(t_i; \theta_i) \big)  \big] = 0$, information sufficiency holds for any $\M$.  
However, generally, the value does not have to be constant:  In the Vickrey auction, this condition holds as long as the highest bidder is determined in regardless of the realized $\states \sim \bsigma$; In Myerson's optimal auction, it is to additionally require the virtual value of the winner to be always positive at any $\states \sim \bsigma$.
% \jw{More generally, we just need $\Cov_{v_i \sim V^{\sigma}_{i} } [ v_i, x_i(v_i, v_{-i})] = 0$ --- perhaps it is also a necessary condition. Is there any way to make this assumption might appear less artificial? }

\begin{definition}[Regular Values]
The regular value condition is satisfied if for any bidder $i$'s value function $v_i: T_i\times \Theta_i \to \RR$, there exists a total ordering of the states such that, for any $\theta_i, \theta'_i \in \Theta_i$, 
$$
v_i(t_i; \theta_i) \geq v_i(t_i; \theta'_i) \iff 
v_i(t'_i; \theta_i) \geq v_i(t'_i; \theta'_i), \quad \forall t_i, t'_i\in T_i.
$$
\end{definition}
This is a natural condition for the value function in the sense that a good state always leads to a higher value than a bad state. More concretely, in Example \ref{ex:expected-myerson}, the state $\theta_1 = 1$ is better than $\theta_0 = 1$, for any $t\geq 1$. In applications such as ad auctions with CVR estimations, this condition is also satisfied: fixing the value per conversion $t_i$, a high conversion rate always leads to more profit than a low conversion rate.

\begin{theorem}\label{thm:simulated-meta-reduce}
For any IC and IR mechanism $\M$ of \oldmodel{} satisfying the information sufficiency and regular value condition, its \smm{} $\bar{\M}$ is ex-post \infoIC{} and IR.
\end{theorem}
\begin{proof}
Fix any bidder $i$ with the remaining bidders being truthful. For any type profile $t_{i}$, signal profile $s_{i}, \sigma_{-i}$, we construct random variable $V^{s}_i, V^{\sigma}_{-i}$ such that $V^{s}_i = v_i(t_i; \theta_{i})$ with probability $\Pr(\theta| s_i,\sigma_{-i})$ and $V^{\sigma}_{-i} = v_{-i}(t_{-i}; \theta_{-i})$ with probability $\Pr(\theta_{-i}|\sigma_{-i})$.
As the remaining bidders are fully-revealing their information variable $\sigma_{-i}$, the realization of their value profile $v_{-i}(t_{-i}; \theta_{-i})$ are independent of the bidder's reported type or signal, due to the conditional independence assumption. Hence, we can separate the randomness of $\theta_{-i} \sim \sigma_{-i}$ and rewrite the bidder's utility under $\bar{\M} = (\bar{x}, \bar{p})$ as,
\begin{align}\nonumber
\util(t_i,t_{-i}; s_{i},\sigma_{-i})  
& = \bar{x}_i (\types ; s_{i},\sigma_{-i})  \cdot v_i( t_i ; s_{i}) - p_i (\types ; s_{i},\sigma_{-i}) \\ \nonumber
& = \Ex_{v_{-i} \sim V^{\sigma}_{-i}} \Ex_{v_i \sim V^{s}_{i} }[ x_i (v_i, v_{-i})] \cdot \Ex_{v_{-i} \sim V^{\sigma}_{-i}} \Ex_{v_i \sim V^{s}_{i} }[v_i] - \Ex_{v_{-i} \sim V^{\sigma}_{-i}} \Ex_{v_i \sim V^{s}_{i} }[p_i (v_i, v_{-i}) ]  \\ \label{eq:simulated-utility}
& = \Ex_{v_{-i} \sim V^{\sigma}_{-i}} \big[ \Ex_{v_i \sim V^{s}_{i} }[ x_i (v_i, v_{-i}) \cdot v_i - p_i (v_i, v_{-i}) ] -  \Cov_{v_i \sim V^{s}_{i} } [ v_i, x_i(v_i, v_{-i})] \big] 
\end{align}
% \begin{align*}
% & \util(t_i,t_{-i}, \sigma_{i},\sigma_{-i}; t_{i}) \\
% = & \Ex_{v_{-i}\sim V^{\sigma}_{-i}} \Ex_{v_i\sim V^{\sigma}_i}[x_i ( v_i, v_{-i} ) ]  \cdot \Ex_{v_i\sim V^{\sigma}_i}[v_i] - \Ex_{v_{-i}\sim V^{\sigma}_{-i}}\Ex_{v_i\sim V^{\sigma}_i}[ p_i (v_i, v_{-i}) ] \\
% = & \Ex_{v_{-i}\sim V^{\sigma}_{-i}} \big[ \Ex_{v_i\sim V^{\sigma}_i} [ x_i ( v_i, v_{-i} )  \cdot v_i - p_i (v_i, v_{-i}) ] - \Cov_{v_i\sim V^{\sigma}_i}[ x_i(v_i, v_{-i}), v_i ]  \big]
% \end{align*}
The information sufficiency condition, $\Var_{v_i \sim V^{\sigma}_{i}}\big[ x_i\big( v_i, v_{-i} \big)  \big] = 0$, implies that, for any random variable $V$, the covariance $\Cov_{v_i \sim V^{\sigma}_{i}, v'\sim V } [ v', x_i(v_i, v_{-i})] = 0$.
Combining with the IR of $\M$, the inequality below holds and thus $\bar{\M}$ satisfies IR,
$$
\util(t_i,t_{-i},;\sigma_{i},\sigma_{-i}) = \Ex_{v_{-i} \sim V^{\sigma}_{-i}} \big[ \Ex_{v_i \sim V^{\sigma}_{i} }[ x_i (v_i, v_{-i}) \cdot v_i - p_i (v_i, v_{-i}) ] -  \Cov_{v_i \sim V^{\sigma}_{i} } [ v_i, x_i(v_i, v_{-i})] \big]  \geq  0. 
$$

To prove $\bar{\M}$ is ex-post \infoIC{}, it suffices to show that the following two inequalities holds, respectively, for any bidder $i\in [n]$,
\begin{equation}
 \label{eq:simulated-iic-decompose}
 \exputil(t_i,t_{-i}; \tau_{i},\tau_{-i})
  \geq 
  \exputil(b_i,t_{-i}; \tau_{i},\tau_{-i})
  \geq 
  \exputil(b_i,t_{-i}; \lambda_{i},\tau_{-i}), 
  \quad \forall b_i \in T_i, \tau_{i} \mps \lambda_{i}, 
\end{equation}
where $(t_i, t_{-i})$, $(\tau_{i}, \tau_{-i})$ are the true type profile and fully-revealing experiments. 

We use two facts to rewrite the above terms of expected utility: (1) The realization of $\sigma_{-i} \sim \lambda_{-i}$ is independent of $\sigma_{i} \sim \lambda_{i}$. (2) Given that $\tau_{i} \mps \lambda_{i}$, there exists some $\pi_i: \Sigma_i \to \Delta(S_i)$ such that $\lambda_i(s_i) = \Pr[ s_i | \pi_i(\sigma_i)] \cdot \tau_i(\sigma_i)$. 
So we can separate the randomness of $\sigma_{-i} \sim \lambda_{-i}$ and $\sigma_{i} \sim \tau_{i}$ in $\exputil$ as,
$$
  \exputil(t_i,t_{-i}; \tau_{i},\tau_{-i}) = \Ex_{\sigma_{-i} \sim \tau_{-i}}  \Ex_{\sigma_{i} \sim \tau_{i}}[\util(t_i,t_{-i}; \sigma_{i},\sigma_{-i})],
$$
$$
  \exputil(b_i,t_{-i}; \tau_{i},\tau_{-i}) = \Ex_{\sigma_{-i} \sim \tau_{-i}}  \Ex_{\sigma_{i} \sim \tau_{i}}[\util(b_i,t_{-i}; \sigma_{i},\sigma_{-i})],
$$
$$
\exputil(b_i,t_{-i}; \lambda_{i},\tau_{-i})
= \Ex_{\sigma_{-i} \sim \tau_{-i}}  \Ex_{s_{i} \sim \lambda_{i}}[\util(b_i,t_{-i}; s_{i},\sigma_{-i})]
= \Ex_{\sigma_{-i} \sim \tau_{-i}} \Ex_{\sigma_{i} \sim \tau_{i}}\Ex_{s_{i} \sim \pi_{i}(\sigma_i)}[\util(b_i,t_{-i}; s_{i},\sigma_{-i})].
$$
By removing the common source of randomness in their expectations, the first inequality in equation \eqref{eq:simulated-iic-decompose} can be implied by, 
\begin{equation}\label{eq:difference-1}
\util(t_i,t_{-i}; \sigma_{i},\sigma_{-i})  \geq \util(b_i,t_{-i}; \sigma_{i},\sigma_{-i}), \quad \forall \sigma_{i}, \sigma_{-i}.
\end{equation} 
Similarly, the second inequality in equation \eqref{eq:simulated-iic-decompose} can be implied by 
\begin{equation}\label{eq:difference-2}
 \util(b_i,t_{-i}; \sigma_{i},\sigma_{-i}) ]  \geq  \Ex_{s_{i} \sim \pi_{i}(\sigma_i)}[\util(b_i,t_{-i}; s_{i},\sigma_{-i})], \quad \forall \sigma_{i}, \sigma_{-i}.   
\end{equation} 

% We can construct random variable $V^{\tau}_i, V^{\lambda}_i, \tilde{V}^{\lambda}_i$ such that $V^{\tau}_i = v_i(t_i; \sigma_{i})$ with probability $\tau_i(\sigma)$ and $V^{\lambda}_i = v_i(t_i; s_i, \sigma_{-i}), \tilde{V}^{\lambda}_i = v_i(b_i; s_i, \sigma_{-i})$ with probability $\lambda_i(s_i)$. 
% Since the remaining bidders are truthful with type $t_{-i}$ and fully-revealing signal $\sigma_{-i} \sim \tau_{-i}$, their value profile $v_{-i}(t_{-i}; \sigma_{-i})$ are independent of the bidder's reported type or signal.
We prove the inequality \eqref{eq:difference-1} and \eqref{eq:difference-2}, respectively by showing the differences of their LHS and RHS are non-negative:
\begin{itemize}[leftmargin=*]
\item To prove inequality \eqref{eq:difference-1}, we have $\forall v_{-i}$,
\begin{align*}
& \util(t_i,t_{-i}; \sigma_{i},\sigma_{-i}) - \util(b_i,t_{-i}; \sigma_{i},\sigma_{-i}) \\
= & \Ex_{v_{-i} \sim V^{\sigma}_{-i}} \big[ \Ex_{v_i \sim V^{\sigma}_{i} }[ x_i (v_i, v_{-i}) \cdot v_i - p_i (v_i, v_{-i}) ] -  \Cov_{v_i \sim V^{\sigma}_{i} } [ v_i, x_i(v_i, v_{-i})] \big] \\
& - \Ex_{v_{-i} \sim V^{\sigma}_{-i}} \big[ \Ex_{v_i \sim V^{\sigma}_{i}, v'_i= \eta(v_i) }[ x_i (v'_i, v_{-i}) \cdot v_i - p_i (v'_i, v_{-i}) ] -  \Cov_{v_i \sim V^{\sigma}_{i}, v'_i= \eta(v_i) } [ v_i, x_i(v'_i, v_{-i})] \big] \\
\geq & \Cov_{v_i \sim V^{\sigma}_{i}, v'_i= \eta(v_i) } [ v_i, x_i(v'_i, v_{-i})] - \Cov_{v_i \sim V^{\sigma}_{i} } [ v_i, x_i(v_i, v_{-i})] 
= 0.
% \\
% \geq & \Cov_{v_i \sim V^{\sigma}_{i}, v'_i\sim \eta(v_i) } [ v_i, x_i(v'_i, v_{-i}) - x_i(v_i, v_{-i})]  
\end{align*}
Here the first equality applies the equation \eqref{eq:simulated-utility} and additionally introduce the mapping $\eta: \RR \to \RR$ to capture the misreporting strategy from the value under true type $v_i(t_i; \theta_i)$ to the value under reported type $v_i(b_i; \theta_i)$. The second inequality uses the IC of the base mechanism $\M$. Finally, the two covariance terms are zero, due to the information sufficiency condition. 

\item To prove inequality \eqref{eq:difference-2}, we have $\forall v_{-i}$,
\begin{align*}
 & \util(b_i,t_{-i}; \sigma_{i},\sigma_{-i}) ] -  \Ex_{s_{i} \sim \pi_{i}(\sigma_i)}[\util(b_i,t_{-i}; s_{i}, \sigma_{-i})]  \\
 = &  \Ex_{\theta_i \sim \sigma_i }[ x_i (v_i(b_i;\theta_i), v_{-i})] \cdot \Ex_{\theta_i \sim \sigma_i }[v_i(t_i;\theta_i)] - \Ex_{\theta_i \sim \sigma_i }[p_i (v_i(b_i;\theta_i), v_{-i}) ]  \\
 & - \Ex_{s_{i} \sim \pi_{i}(\sigma_i)}\big[ \Ex_{\theta_i \sim s_i }[ x_i (v_i(b_i;\theta_i), v_{-i})] \cdot \Ex_{\theta_i \sim s_i }[v_i(t_i;\theta_i)] - \Ex_{\theta_i \sim s_i }[p_i (v_i(b_i;\theta_i), v_{-i}) ]  \big] \\
 = &  \Ex_{s_{i} \sim \pi_{i}(\sigma_i)}\Ex_{\theta_i \sim s_i } [ x_i (v_i(b_i;\theta_i), v_{-i})] \cdot \Ex_{s_{i} \sim \pi_{i}(\sigma_i)}\Ex_{\theta_i \sim s_i } [v_i(t_i;\theta_i)] \\
 & - \Ex_{s_{i} \sim \pi_{i}(\sigma_i)}\big[ \Ex_{\theta_i \sim s_i }[ x_i (v_i(b_i;\theta_i), v_{-i})] \cdot \Ex_{\theta_i \sim s_i }[v_i(t_i;\theta_i)]  \big] \\
 = &  \Cov_{s_{i} \sim \pi_{i}(\sigma_i)} \big[ \Ex_{\theta_i \sim s_i}[ v_i(t_i;\theta_i) ],  \Ex_{\theta_i \sim s_i}[x_i(v_i(b_i;\theta_i), v_{-i})] \big] \geq 0.
%  \geq &  \Ex_{v'_i \sim \tilde{V}^{\sigma}_{i} }[ x_i (v'_i, v_{-i})] \cdot \Ex_{v_i \sim V^{\sigma}_{i} }[v_i] - \Ex_{v'_i \sim \tilde{V}^{\sigma}_{i} }[p_i (v'_i, v_{-i}) ]  \\
%  & - \Ex_{s_{i} \sim \pi_{i}(\sigma_i)}\big[ \Ex_{v'_i \sim \tilde{V}^{s}_{i} }[ x_i (v'_i, v_{-i})] \cdot \Ex_{v_i \sim V^{s}_{i} }[v_i] - \Ex_{v'_i \sim \tilde{V}^{s}_{i} }[p_i (v'_i, v_{-i}) ]  \big] \\
%  = &  \Cov_{s_{i} \sim \pi_{i}(\sigma_i)} \big[ \Ex_{v_i \sim V^{s}_{i}}[v_i],  \Ex_{v'_i \sim \tilde{V}^{s}_{i}}[x_i (v'_i, v_{-i})] \big] \geq 0
\end{align*}
Here the first equality is by definition, the second equality cancels out the payment term, and separates the randomness of $\theta_i \sim \sigma_i$ into  $s_{i} \sim \pi_{i}(\sigma_i)$ and $\theta_i \sim s_i$. The last equality is by the definition of covariance and the term is non-negative by the regular value condition and the monotone allocation of $x$ given by the truthful mechanism $\M$, according to Myerson's characterization of truthful mechanism~\cite{myerson1981optimal}. That is, the state that gives the higher true value always leads to the higher value under reported type and thus higher allocation, i.e., $\forall \theta_i, \theta'_i$, if
$v_i(t_i; \theta_i) \geq v_i(t_i; \theta'_i)$, then  
$v_i(b_i; \theta_i) \geq v_i(b_i; \theta'_i)$ and thus $x_i(v_i(b_i;\theta_i), v_{-i}) \geq x_i(v_i(b_i;\theta'_i), v_{-i})$.\qedhere
% where the first equality uses the additional random variables $\tilde{V}^{\sigma}_i, \tilde{V}^{s}_i$ for some reported type $b_i$ such that $\tilde{V}^{\sigma}_i = v_i(b_i; \theta_{i})$ with probability $\Pr(\theta| \sigma_i)$ and $\tilde{V}^{s}_i = v_i(b_i; \theta_{i})$ with probability $\Pr(\theta| s_i,\sigma_{-i})$.
\end{itemize}
% This concludes the proof that $\bar{\M}$ is ex-post IIC and IR.\qedhere
\end{proof}

\subsection{The Simulated Myerson Auction for Revenue Maximization}
\label{sec:simulated-myerson}
In Section \ref{sec:wel-max}, we found that the \emm{} can be used to achieve maximum welfare but unlikely for maximum revenue. In this section, we present \smm{} as the ideal candidate to leverage Myerson's optimal auction for revenue maximization in \newmodel{}.  This is essentially because the simulation can target its optimization objective at each realization and thus the expected outcome would be a global optimum.  

To simplify our notation, we start by defining the state-dependent value distribution as a function of type $t_i$, $F_i(\cdot; \theta_i) \coloneqq F_i^\theta(v_i(\cdot; \theta_i))$, and accordingly denote its derivative as the density function $f_i(\cdot; \theta_i) $. Let $ \phi_i(\cdot; \theta_i) \coloneqq v_i(\cdot; \theta_i) - \frac{1-F_i(\cdot; \theta_i) }{ f_i(\cdot; \theta_i)} $ denote the state-dependent virtual value function w.r.t. the type and $\phi^{-1}_i(\cdot;\theta_i)$ denote its inverse function. 
% \jw{This require $v_i$ to be monotone w.r.t. $t_i$.}

To utilize the reduction result in Theorem \ref{thm:simulated-meta-reduce}, we shall first notice that the Myerson's optimal auction as the base mechanism does not necessarily satisfy IC for some state profile $\states\in \Theta$. Going back to the Example \ref{ex:expected-myerson}, the bidder $1$'s virtual value is not monotone w.r.t. his value, so monotone allocation property would violate and thus cannot be IC, according to Myerson's characterization of truthful mechanism~\cite{myerson1981optimal}. 
% For example, $v_i(t_i || \states) > v_i(t_i || \states')$ does not necessarily imply that $\phi_i(t_i || \states) > \phi_i(t_i || \states')$, since the conditional value distribution $F^{\states}_i, F^{\states'}_i$ could be completely different. Thus, the monotone allocation property could violate.
% \jw{This is why we need regular virtual value.}
Therefore, we start with the simple case where we introduce the notion of \emph{regular virtual value condition} given in Definition \ref{def:regularity}  to describe the special value function and type distribution that extend the IC of Myerson's optimal auction to the space of $\Theta$ so the optimization program can be easily solved. 

% the bidder $i$'s virtual value under $(\theta_i, \theta_{-i})$ is not necessarily larger than that under $(\theta'_i, \theta_{-i})$

\begin{definition}[Regular Virtual Value] \label{def:regularity}
The regular virtual value condition is satisfied if the virtual value function is  non-decreasing in the value w.r.t. both type and state.
That is, for any $i\in[n]$,
    $$ 
    v_i(t_i ; \theta_i) \geq v_i(t'_i ; \theta'_i) 
    \iff  \phi_i(t_i ; \theta_i) \geq \phi_i(t'_i ; \theta_i') ,\ \forall t_i, t'_i \in T_i,\ \theta_i, \theta_i' \in \Theta_i . $$
\end{definition}

% Specifically, we identify a revenue-optimal mechanism for \newmodel{} by solving the general solution of an optimization program, which is similar to the approach of  \citet{myerson1981optimal}. 
With the regular virtual value condition, we are able to construct IIC mechanism based on Myerson's optimal auction, as in Mechanism \ref{al:rev-opt}, and we will refer to it the Simulated-Myerson Auction.
Furthermore, we show that with the Simulated-Myerson Auction is revenue optimal. We defer its formal description and the proof of Proposition \ref{prop:simulated-myerson} to Appendix \ref{append:simulated-myerson}.

\begin{proposition}\label{prop:simulated-myerson}
Under the information sufficiency, regular value and regular virtual value condition, truthfulness forms a Bayes NE in ex-post stage of Simulated-Myerson Auction, which achieves the optimal revenue. 
\end{proposition}

The nice results above relies on the \emph{regular virtual value condition}, and next we show that it is possible to relax such condition through the technique, commonly known as the ironing trick~\cite{myerson1981optimal}. Specifically in the proposition below, we show that if the value function is separable as in the Example in Section \ref{sec:ad-auction} where $v_i(t_i ; \states) = t_i c_i(\states)$, then ironing on the type suffices to restore the regular virtual value condition from any distribution (see the formal proof in Appendix \ref{append:simulated-myerson}).

\begin{proposition}\label{prop:ironing}
If a bidder's value function $v_i(t_i; \theta_i)$ is separable in $t_i$ and $\theta_i$, after applying the ironing trick on $t_i$ w.r.t. its CDF distribution $H(t_i)$, any value distribution $F_i^{\theta}(\cdot)$ satisfies the regular virtual value condition. 
\end{proposition}

% \jw{This proof is also standard, shall I move to the appendix?}
However, for the case where the value function is not separable, we note that ironing is an non-trivial problem: This is because even if the distribution of $t$ is regular, it is not always true that a state that always leads to higher value would always lead to higher virtual value, i.e., if $v_i(t_i ; \theta_i) > v_i(t_i ; \theta_i'), \forall t_i$, then $\phi_i(t_i ; \theta_i) > \phi_i(t_i ; \theta_i') , \forall t_i$. More concretely, in Example \ref{ex:expected-myerson}, bidder $1$'s type distribution is regular, but we can see that for any $t>1$, while $ v_1(t ; 0) \leq v_1(t;1)$, $\phi_1(t;0) >  \phi_1(t;1)$. Therefore, it remains an open question on the existence of the ironing technique to restore regular virtual value in general setting of this kind.
% with the additional information variable $\states$.  

% Consider the following example, pick any bidder $i$ and we drop the subscript $i$ for notational convenience, let the type distribution follow $h(t)=\frac{2}{t^3}, H(t)= 1-\frac{1}{t^2}$ for $t\in[1,\infty)$, and the value function at different state $\states, \states'$ be $v(t|\states)=t, v(t|\states')=t^2$ respectively. So $t$ is indeed regular, as $t-\frac{1-H(t ) }{ h(t )}=\frac{t}{2}$ is monotone strictly increasing function. Meanwhile, the virtual value function is $\phi(t|\states)=t-\frac{1-H(t ) }{ h(t )}=\frac{t}{2}, \phi(t|\states')=t^2-2t\frac{1-H(t ) }{ h(t )}=0$. We can see that for any $t>1$, while $ v(t|\states) \leq v(t|\states')$, $\phi(t|\states) >  \phi(t|\states')$, a counterexample to the statement above. So it remains an open question how to properly define the ironing trick for the general setting with the information variable $\states$ playing a role.  

% \hf{Would expected Myerson maximize revenue? Can you give a counter example?}

% \hf{add a discussion with Fu et al. paper "ad auctions with data". }

% \hf{Under information regulation, what is the optimal auction? I think it is expected Myerson.}    
% \input{draft/surplus}
\section{Information Regulation ---  Restraint for More} \label{sec:regulate}
Till this end, we want to emphasize that all proofs above require the ex-post condition where all other bidders are truthful. So unlike the Vickrey auction or Myerson's optimal auction in classic setting, truthfulness here does not necessarily form a dominant strategy equilibrium.
The following is an example that truthfulness cannot be a dominant strategy in the Expected-Vickrey auction. This is in contrast to the classic results in \oldmodel{} setup in which truthfulness does form a dominant-strategy equilibrium in private value models.  This impossibility of obtaining  dominant-strategy equilibrium in Expected-Vickrey auction is due to the following reason: in our model, the auctioneer can use the information elicited from one bidder to estimate the value of another bidder by leveraging the correlation among bidders' information variables. 
We make this intuition concrete in the following example: 

\begin{example}[Truthfulness is not a dominant strategy in Expected-Vickrey auction.] \label{ex:dominant-fail}
In an auction with three bidders under \infobid{}. Their information variables are $\theta_1, \theta_2, \theta_3 \sim \{0.25, 0.75\}$, which are highly correlated, $\theta_1 = \theta_2 = 1-\theta_3$. Their types are fixed as $t_1=100, t_2=t_3=1$. Their value functions follow the ad auction model as $v_i = t_i \cdot \theta_i$. 

To argue that truthfulness is not a dominant strategy for bidder $1$, let us consider the situation in which   bidder $2$ and $3$ both  reveal no information. We can see that in this case revealing full information is not optimal for   bidder $1$. If  bidder $1$ reveals no information, the estimated state for every bidder is $0.5$ and bidder $1$ would always win the bid and pay the second highest price $0.5$, receiving surplus $50-0.5$. However, if   bidder $1$ reveals full information, with $1/2$ chances he reveals $\theta_1=0.25$ and wins the bid with surplus $0.25*100-0.75$; with $1/2$ chances, he reveals $\theta_1=0.75$  and wins the bid with surplus $0.75*100-0.75$. So for bidder 1, his expected utility $50-0.75$ under fully revelation scheme is worse than his utility if he reveals no information. So full information revealing cannot be a dominant strategy for bidder 1. 
% \hf{I have to say the writing of this paper needs a lot of time for polishing. Even just for this example, I made too many syntax editings (I kept the track of changes on, please take a look.). You are using "it" to refer to a bidder. Sometimes you say bidder 1 and sometimes say "the first bidder" (should refer to him using the   same way). I had time to go through the paper in details until Section 3.1, but did not really do thorough writing check afterwards. You will need to read each sentence carefully to make sure no obvious syntax issues present. With the current writing, I feel like even your results are great it will have a hard time during EC reviews.}
\end{example}

Meanwhile, many bidders in practice could have concerns about the auctioneer using the revealed information of their users or clients to the estimate of other bidders' value, possibly due to various privacy or compliance reasons. As a result, we investigate an \emph{information regulation} setting that the auctioneer is enforced to only use each bidder’s own information to estimate his value. 
In the case of CVR based ad auction described in Section \ref{sec:ad-auction}, such practice is to only use each bidder's own data in the machine learning and prediction process of his CVR function, such that the belief of each bidder's state is marginalized from other bidders' private information. We formalize this setting as \emph{Information Regulation} in Definition \ref{def:info-reg}.

\begin{definition}[Information Regulation]\label{def:info-reg}
In the \emph{Auctioneer Commitment} stage, the auctioneer commits to estimate each bidder $i$'s state with only his own signal $s_i \in \Delta(\Theta_i)$. 
Formally, given signaling profile $(s_i, s_{-i})$, the auctioneer's belief of each bidder $i$'s state is determined as $ \Pr(\theta_i | s_i) $.
% , according to \emph{only} each bidder's own signal $s_i$.
% the state of each bidder is then estimated as $ V_i(\cdot; s_i ) = E_{ \theta_{i} \sim s_i } V_i(\cdot ; \theta_i) $, according to \emph{only} each bidder's own signal $s_i$.
\end{definition}

Perhaps surprisingly, regulating the use of information to account for potential privacy concern turns out to be a marvelous practice that would not only preserve revenue or welfare but also make the mechanism more robust ---  one stone three birds! 
% Specifically, in Theorem \ref{thm:dominant-meta}, we show that the two types of meta mechanisms introduced before could achieve dominant strategy \infoIC{} with only an additional step of information regulation. 
Specifically, in Theorem \ref{thm:dominant-meta}, we show that any ex-post \infoIC{} mechanisms becomes dominant strategy \infoIC{} with only an additional step of information regulation (see the formal proof in Appendix~\ref{appendix:info-reg}). 

\begin{theorem}\label{thm:dominant-meta}
If ${\M}$ is ex-post \infoIC{}, ${\M}$ is dominant-strategy \infoIC{} under information regulation.
\end{theorem}

Since the bidders are truthful under information regulation, the allocation and payment (thus, welfare and revenue) would stay the same as if they were to be truthful without information regulation. 
Therefore, with Proposition \ref{thm:truthfulness-private-value} and \ref{prop:simulated-myerson}, we are able to derive the two corollaries below.

% \begin{theorem}\label{thm:dominant-meta}
% For any mechanism $\M$ of \oldmodel{}, under information regulation we have:
% \begin{enumerate}
%     \item If the properness condition is satisfied, its \emm{} $\bar{\M}$ is dominant-strategy \infoIC{} if and only if $\M$ is dominant-strategy IC. 
    
%     \item If the information sufficiency and regular value condition is satisfied, its \smm{} $\bar{\M}$ is dominant-strategy \infoIC{} if and only if $\M$ is dominant-strategy IC.

% \end{enumerate}  
% \end{theorem}
% The proof is based on Theorem \ref{thm:expected-meta-reduce} and \ref{thm:simulated-meta-reduce} and show that under information regulation, the ex-post \infoIC{} can be strengthen to dominant-strategy \infoIC{}. We defer the formal arguments to Appendix~\ref{appendix:info-reg}.

% As we know both the Vickrey and Myerson auction are dominant strategy IC mechanism in the standard private value setting, their corresponding Meta mechanism are thus dominant-strategy \infoIC{}. Hence, we can claim the following two corollaries. 

\begin{corollary}\label{thm:expected-2nd-dominant}
Under information regulation, the Expected-Vickrey Auction is dominant-strategy \infoIC{} and IR, which achieves maximum welfare.
\end{corollary}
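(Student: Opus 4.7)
The plan is to obtain this corollary as a direct instantiation of Theorem~\ref{thm:dominant-meta} (Part 1). Since Mechanism~\ref{al:expected-2nd} is by construction the Expected Meta Mechanism applied to the classic Vickrey auction, it suffices to verify the two hypotheses of Part 1: (i) the Vickrey auction is dominant-strategy IC in the classic \oldmodel{} setting, which is the standard result of \citet{vickrey1961counterspeculation}; and (ii) the bidder's utility function induced by the Vickrey auction is convex over the simplex $\Delta_\Theta$.

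For (ii), I would adapt the calculation already carried out in the proof of Proposition~\ref{thm:truthfulness-private-value}. The key observation is that under information regulation, the auctioneer's estimate $\bv_j(b_j\|s_j)$ of any bidder $j\neq i$ depends only on bidder $j$'s own type report and signal, so the threshold $v^\ast := \max_{j\neq i}\bv_j(b_j\|s_j)$ is a constant with respect to bidder $i$'s action $(b_i,s_i)$. Consequently the utility of bidder $i$ under the Expected Vickrey rule takes the form $\max\{v^\ast,\bv_i(t_i\|s_i)\}-v^\ast$ whenever $i$ would win, and since $\bv_i(t_i\|s_i)$ is linear in $s_i\in\Delta_{\Theta_i}$, this is the maximum of two affine functions and therefore convex on $\Delta_{\Theta_i}$ (and extends to $\Delta_\Theta$ by the regulation). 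Invoking Theorem~\ref{thm:dominant-meta}(1) then yields dominant-strategy \infoIC{}.

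For IR, I would note that the Vickrey auction is IR at every value profile, so by Theorem~\ref{thm:expected-meta-reduce} (or by a one-line direct check at fully-revealing signal profiles together with the expected-value formulation) the Expected Meta Mechanism inherits IR. Finally, welfare optimality follows as in Proposition~\ref{thm:truthfulness-private-value}: since the dominant strategy is truthful reporting and full revelation, the mechanism always allocates to $\argmax_i \bv_i(t_i\|\theta_i)$, which by definition is the maximum achievable welfare.

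The main subtlety to watch for is in step (ii): without information regulation, $v^\ast$ could vary with bidder $i$'s signal through the auctioneer's cross-bidder inference, which would break the ``constant threshold'' argument and hence the convexity. Information regulation is precisely what severs this dependency and lets the classic convexity-of-max argument go through, so it is essential to invoke the regulation assumption in exactly this step rather than treating it as a purely cosmetic constraint.
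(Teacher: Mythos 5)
Your proposal is correct and follows essentially the same route as the paper, which derives this corollary directly from Theorem~\ref{thm:dominant-meta}(1) together with the facts that the Vickrey auction is dominant-strategy IC and that its utility is convex (as already verified in Proposition~\ref{thm:truthfulness-private-value}). Your added observation that information regulation is what keeps the threshold $v^\ast$ independent of bidder $i$'s own signal is a correct and useful elaboration of a point the paper leaves implicit, but it does not constitute a different argument.
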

\begin{corollary}\label{thm:rev-opt-dominant}
Under information regulation, the Simulated-Myerson Auction under information sufficiency, regular value and regular virtual value condition is dominant-strategy \infoIC{} and IR, which achieves maximum revenue.
\end{corollary}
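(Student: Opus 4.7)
The plan is to recognize this corollary as a direct specialization of Theorem \ref{thm:dominant-meta} part 2 (applied to Myerson's optimal auction as the input mechanism $M$) combined with the revenue-optimality already established for Mechanism \ref{al:rev-opt} in ex-post equilibrium. The three verifications needed to invoke Theorem \ref{thm:dominant-meta} are: (a) Myerson's auction in the classic private-value setting is dominant-strategy IC and IR; (b) its allocation rule is monotonic in the bidder's value (at any fixed $\theta$); (c) it satisfies the Indicative State assumption.

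First, I would note that Mechanism \ref{al:rev-opt} is exactly the \smm{} applied to Myerson's optimal auction, with the virtual values $\phi_i(b_i||\states)$ computed from the per-state distributions $F_i(\cdot||\states)$ recovered inside Meta Mechanism \ref{al:simulated-meta}. Classical Myerson guarantees that the resulting auction (for each realized $\states$) is dominant-strategy IC and IR in the misreporting sense. Under the strong regularity assumption in Definition \ref{def:regularity}, the per-state allocation rule $x_i(\bids||\states)$ reduces to allocating to the bidder with the highest (positive) virtual value, which is monotonic in the bidder's value across the support of $\Theta$ because the virtual-value ordering coincides with the value ordering. The Indicative State assumption is discussed right after its definition for Myerson's auction: the allocation depends on the sign and ranking of virtual values, which under our separability of $\theta_0$ from $(\theta_i,\theta_{-i})$ is unaffected by $\theta_0$ (provided the winner's virtual value is positive at every $\theta_0$, which is the qualification already noted in the text).

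With (a)--(c) in hand, I would then invoke Theorem \ref{thm:dominant-meta} part 2 to conclude that under information regulation, $\bar{M}$ is dominant-strategy \infoIC{}. The IR claim follows the same route: since the input Myerson mechanism is ex-post IR at each $\states$ and the Indicative State assumption makes $\Cov_{\theta_0\sim g_0}[v_i(t_i||\states),x_i(\types||\states)]=0$ (exactly as in the proof of Theorem \ref{thm:simulated-meta-reduce}), the utility at any fully revealing profile is a non-negative expectation, hence IR holds. For the revenue claim, note that the truthful action profile (truthful bid + full revelation) yields the same allocation and payment as in the ex-post truthful equilibrium analyzed in the preceding proposition, so the realized revenue is unchanged; since that proposition established that this realized revenue matches the upper bound on what any \infoIC{} and IR mechanism can obtain, the dominant-strategy equilibrium here inherits revenue-optimality.

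The main obstacle I anticipate is purely verifying condition (b): one must be careful that ``monotonic allocation over $\Theta$'' is the right notion used in Theorem \ref{thm:simulated-meta-reduce} and \ref{thm:dominant-meta}, and that strong regularity --- which compares virtual values across both type and state --- is precisely what turns the virtual-value-monotone Myerson allocation into a value-monotone one. Once this translation is spelled out, the rest of the argument is a mechanical chaining of the two earlier theorems, and no new analytic work is needed.
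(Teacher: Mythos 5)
Your proposal is correct and follows essentially the same route as the paper, which justifies this corollary in a single sentence by noting that Myerson's auction is dominant-strategy IC in the standard private value setting and then invoking Theorem \ref{thm:dominant-meta} (part 2), with IR and revenue-optimality inherited from the earlier proposition on Mechanism \ref{al:rev-opt}. Your write-up simply makes explicit the condition-checking (monotonicity via strong regularity, the indicative state assumption, and the covariance argument for IR) that the paper leaves implicit.
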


In retrospect, this setting closely resembles the independent private value setting of classic auctions, as bidders' estimated valuations, without each other's information for inference, are now essentially independent and private. This result informs us of the counter-intuitive benefits of information regulation --- a good practice that auctioneers should follow in the face of \newmodel{}.

\section{Conclusion}
We claim our contribution as to first formalize the framework and explore the landscape of a novel auction design problem with \newmodel{}, which we predict will become increasingly mainstream in this data-driven world. Notably, through the black-box transformation, two different types of simple meta mechanisms leverage the classic auctions to maximize welfare and revenue, respectively. This suggests some fundamental characteristics of the two objectives when information elicitation is involved in the mechanism. In the follow-up work, we intend to study the \newmodel{} in the more general interdependent value setting as well as the multi-item allocation setup. We also want to continue searching for optimal mechanisms under more relaxed assumptions. Meanwhile, in the \newmodel{} model we propose, several open questions emerge. For example, does the correlation of bidders' information variables serves a similar purpose of competition in classic auctions, and how does such competition improves the revenue guarantee of simple auctions (e.g., Mechanism~\ref{al:expected-2nd})? In an online learning setup, can the auctioneer efficiently learn the optimal auction design through interactions with the unknown \newmodel{}?

% \hf{Add open problems: for example. algorithmic design of optimal mechanims, optimality in cases beyond our setting. } \jw{What does "algorithmic design of optimal mechanisms" mean? }

% In the interest of anonymization, please do not include acknowledgements in your submission.
%
%\begin{acks}
%
%	The authors would like to thank Dr. Maura Turolla of Telecom
%	Italia for providing specifications about the application scenario.
%
%	The work is supported by the \grantsponsor{GS501100001809}{National
%		Natural Science Foundation of
%		China}{http://dx.doi.org/10.13039/501100001809} under Grant
%	No.:~\grantnum{GS501100001809}{61273304\_a}
%	and~\grantnum[http://www.nnsf.cn/youngscientsts]{GS501100001809}{Young
%		Scientsts' Support Program}.
%
%
%\end{acks}

% Bibliography
\bibliographystyle{ACM-Reference-Format}
\bibliography{refer}

%%% -*-BibTeX-*-
%%% Do NOT edit. File created by BibTeX with style
%%% ACM-Reference-Format-Journals [18-Jan-2012].

\begin{thebibliography}{46}

%%% ====================================================================
%%% NOTE TO THE USER: you can override these defaults by providing
%%% customized versions of any of these macros before the \bibliography
%%% command.  Each of them MUST provide its own final punctuation,
%%% except for \shownote{}, \showDOI{}, and \showURL{}.  The latter two
%%% do not use final punctuation, in order to avoid confusing it with
%%% the Web address.
%%%
%%% To suppress output of a particular field, define its macro to expand
%%% to an empty string, or better, \unskip, like this:
%%%
%%% \newcommand{\showDOI}[1]{\unskip}   % LaTeX syntax
%%%
%%% \def \showDOI #1{\unskip}           % plain TeX syntax
%%%
%%% ====================================================================

\ifx \showCODEN    \undefined \def \showCODEN     #1{\unskip}     \fi
\ifx \showDOI      \undefined \def \showDOI       #1{#1}\fi
\ifx \showISBNx    \undefined \def \showISBNx     #1{\unskip}     \fi
\ifx \showISBNxiii \undefined \def \showISBNxiii  #1{\unskip}     \fi
\ifx \showISSN     \undefined \def \showISSN      #1{\unskip}     \fi
\ifx \showLCCN     \undefined \def \showLCCN      #1{\unskip}     \fi
\ifx \shownote     \undefined \def \shownote      #1{#1}          \fi
\ifx \showarticletitle \undefined \def \showarticletitle #1{#1}   \fi
\ifx \showURL      \undefined \def \showURL       {\relax}        \fi
% The following commands are used for tagged output and should be
% invisible to TeX
\providecommand\bibfield[2]{#2}
\providecommand\bibinfo[2]{#2}
\providecommand\natexlab[1]{#1}
\providecommand\showeprint[2][]{arXiv:#2}

\bibitem[\protect\citeauthoryear{??}{CVR}{[n.\,d.]}]%
        {CVRtracking}
 \bibinfo{year}{[n.\,d.]}\natexlab{}.
\newblock \bibinfo{title}{About conversion tracking}.
\newblock
  \bibinfo{howpublished}{\url{https://support.google.com/google-ads/answer/1722022?hl=en}}.
\newblock
\newblock
\shownote{Accessed: 2023-01-20}.


\bibitem[\protect\citeauthoryear{??}{Con}{[n.\,d.]a}]%
        {ConversionBid}
 \bibinfo{year}{[n.\,d.]}\natexlab{a}.
\newblock \bibinfo{title}{About Maximize conversion value bidding}.
\newblock
  \bibinfo{howpublished}{\url{https://support.google.com/google-ads/answer/7684216?hl=en}}.
\newblock
\newblock
\shownote{Accessed: 2023-01-20}.


\bibitem[\protect\citeauthoryear{??}{Con}{[n.\,d.]b}]%
        {ContextualSignal}
 \bibinfo{year}{[n.\,d.]}\natexlab{b}.
\newblock \bibinfo{title}{About Smart Bidding: Automated bidding signals}.
\newblock
  \bibinfo{howpublished}{\url{https://support.google.com/google-ads/answer/7065882\#signals?hl=en&zippy=\%2Cautomated-bidding-signals}}.
\newblock
\newblock
\shownote{Accessed: 2023-01-20}.


\bibitem[\protect\citeauthoryear{??}{Joi}{[n.\,d.]}]%
        {JoinData}
 \bibinfo{year}{[n.\,d.]}\natexlab{}.
\newblock \bibinfo{title}{Join first-party data}.
\newblock
  \bibinfo{howpublished}{\url{https://developers.google.com/ads-data-hub/guides/join-your-data?hl=en}}.
\newblock
\newblock
\shownote{Accessed: 2023-01-20}.


\bibitem[\protect\citeauthoryear{Alijani, Banerjee, Munagala, and Wang}{Alijani
  et~al\mbox{.}}{2020}]%
        {alijani2020limits}
\bibfield{author}{\bibinfo{person}{Reza Alijani}, \bibinfo{person}{Siddhartha
  Banerjee}, \bibinfo{person}{Kamesh Munagala}, {and} \bibinfo{person}{Kangning
  Wang}.} \bibinfo{year}{2020}\natexlab{}.
\newblock \showarticletitle{The Limits of an Information Intermediary in
  Auction Design}.
\newblock \bibinfo{journal}{\emph{arXiv preprint arXiv:2009.11841}}
  (\bibinfo{year}{2020}).
\newblock


\bibitem[\protect\citeauthoryear{Badanidiyuru, Bhawalkar, and Xu}{Badanidiyuru
  et~al\mbox{.}}{2018}]%
        {Badanidiyuru2018targeting}
\bibfield{author}{\bibinfo{person}{Ashwinkumar Badanidiyuru},
  \bibinfo{person}{Kshipra Bhawalkar}, {and} \bibinfo{person}{Haifeng Xu}.}
  \bibinfo{year}{2018}\natexlab{}.
\newblock \showarticletitle{Targeting and signaling in ad auctions}. In
  \bibinfo{booktitle}{\emph{Proceedings of the Twenty-Ninth Annual ACM-SIAM
  Symposium on Discrete Algorithms}}. SIAM, \bibinfo{pages}{2545--2563}.
\newblock


\bibitem[\protect\citeauthoryear{Bergemann, Brooks, and Morris}{Bergemann
  et~al\mbox{.}}{2015}]%
        {bergemann2015limits}
\bibfield{author}{\bibinfo{person}{Dirk Bergemann}, \bibinfo{person}{Benjamin
  Brooks}, {and} \bibinfo{person}{Stephen Morris}.}
  \bibinfo{year}{2015}\natexlab{}.
\newblock \showarticletitle{The limits of price discrimination}.
\newblock \bibinfo{journal}{\emph{American Economic Review}}
  \bibinfo{volume}{105}, \bibinfo{number}{3} (\bibinfo{year}{2015}),
  \bibinfo{pages}{921--57}.
\newblock


\bibitem[\protect\citeauthoryear{Bergemann and Morris}{Bergemann and
  Morris}{2019}]%
        {bergemann2019information}
\bibfield{author}{\bibinfo{person}{Dirk Bergemann} {and}
  \bibinfo{person}{Stephen Morris}.} \bibinfo{year}{2019}\natexlab{}.
\newblock \showarticletitle{Information design: A unified perspective}.
\newblock \bibinfo{journal}{\emph{Journal of Economic Literature}}
  \bibinfo{volume}{57}, \bibinfo{number}{1} (\bibinfo{year}{2019}),
  \bibinfo{pages}{44--95}.
\newblock


\bibitem[\protect\citeauthoryear{Blackwell}{Blackwell}{1953}]%
        {blackwell1953equivalent}
\bibfield{author}{\bibinfo{person}{David Blackwell}.}
  \bibinfo{year}{1953}\natexlab{}.
\newblock \showarticletitle{Equivalent comparisons of experiments}.
\newblock \bibinfo{journal}{\emph{The annals of mathematical statistics}}
  (\bibinfo{year}{1953}), \bibinfo{pages}{265--272}.
\newblock


\bibitem[\protect\citeauthoryear{Bos, Pollrich, et~al\mbox{.}}{Bos
  et~al\mbox{.}}{2020}]%
        {bos2020optimal}
\bibfield{author}{\bibinfo{person}{Olivier Bos}, \bibinfo{person}{Martin
  Pollrich}, {et~al\mbox{.}}} \bibinfo{year}{2020}\natexlab{}.
\newblock \bibinfo{booktitle}{\emph{Optimal auctions with signaling bidders}}.
\newblock \bibinfo{type}{{T}echnical {R}eport}.
  \bibinfo{institution}{University of Bonn and University of Mannheim,
  Germany}.
\newblock


\bibitem[\protect\citeauthoryear{Bro~Miltersen and Sheffet}{Bro~Miltersen and
  Sheffet}{2012}]%
        {bro2012send}
\bibfield{author}{\bibinfo{person}{Peter Bro~Miltersen} {and}
  \bibinfo{person}{Or Sheffet}.} \bibinfo{year}{2012}\natexlab{}.
\newblock \showarticletitle{Send mixed signals: earn more, work less}. In
  \bibinfo{booktitle}{\emph{Proceedings of the 13th ACM Conference on
  Electronic Commerce}}. \bibinfo{pages}{234--247}.
\newblock


\bibitem[\protect\citeauthoryear{Brooks, Frankel, and Kamenica}{Brooks
  et~al\mbox{.}}{2022}]%
        {brooks2022information}
\bibfield{author}{\bibinfo{person}{Benjamin Brooks}, \bibinfo{person}{Alexander
  Frankel}, {and} \bibinfo{person}{Emir Kamenica}.}
  \bibinfo{year}{2022}\natexlab{}.
\newblock \showarticletitle{Information hierarchies}.
\newblock \bibinfo{journal}{\emph{Econometrica}} \bibinfo{volume}{90},
  \bibinfo{number}{5} (\bibinfo{year}{2022}), \bibinfo{pages}{2187--2214}.
\newblock


\bibitem[\protect\citeauthoryear{Cai, Echenique, Fu, Ligett, Wierman, and
  Ziani}{Cai et~al\mbox{.}}{2020}]%
        {cai2020third}
\bibfield{author}{\bibinfo{person}{Yang Cai}, \bibinfo{person}{Federico
  Echenique}, \bibinfo{person}{Hu Fu}, \bibinfo{person}{Katrina Ligett},
  \bibinfo{person}{Adam Wierman}, {and} \bibinfo{person}{Juba Ziani}.}
  \bibinfo{year}{2020}\natexlab{}.
\newblock \showarticletitle{Third-party data providers ruin simple mechanisms}.
\newblock \bibinfo{journal}{\emph{Proceedings of the ACM on Measurement and
  Analysis of Computing Systems}} \bibinfo{volume}{4}, \bibinfo{number}{1}
  (\bibinfo{year}{2020}), \bibinfo{pages}{1--31}.
\newblock


\bibitem[\protect\citeauthoryear{Chen, Reeves, Pennock, Hanson, Fortnow, and
  Gonen}{Chen et~al\mbox{.}}{2007}]%
        {chen2007bluffing}
\bibfield{author}{\bibinfo{person}{Yiling Chen}, \bibinfo{person}{Daniel~M
  Reeves}, \bibinfo{person}{David~M Pennock}, \bibinfo{person}{Robin~D Hanson},
  \bibinfo{person}{Lance Fortnow}, {and} \bibinfo{person}{Rica Gonen}.}
  \bibinfo{year}{2007}\natexlab{}.
\newblock \showarticletitle{Bluffing and strategic reticence in prediction
  markets}. In \bibinfo{booktitle}{\emph{International Workshop on Web and
  Internet Economics}}. Springer, \bibinfo{pages}{70--81}.
\newblock


\bibitem[\protect\citeauthoryear{Cr{\'e}emer and McLean}{Cr{\'e}emer and
  McLean}{1985}]%
        {creemer1985optimal}
\bibfield{author}{\bibinfo{person}{Jacques Cr{\'e}emer} {and}
  \bibinfo{person}{Richard~P McLean}.} \bibinfo{year}{1985}\natexlab{}.
\newblock \showarticletitle{Optimal selling strategies under uncer0 tainty for
  a discriminating monopolist when demands are interdepen0 dentY}.
\newblock \bibinfo{journal}{\emph{Econometrica}}  \bibinfo{volume}{53}
  (\bibinfo{year}{1985}), \bibinfo{pages}{345--361}.
\newblock


\bibitem[\protect\citeauthoryear{Cr{\'e}mer and McLean}{Cr{\'e}mer and
  McLean}{1988}]%
        {cremer1988full}
\bibfield{author}{\bibinfo{person}{Jacques Cr{\'e}mer} {and}
  \bibinfo{person}{Richard~P McLean}.} \bibinfo{year}{1988}\natexlab{}.
\newblock \showarticletitle{Full extraction of the surplus in Bayesian and
  dominant strategy auctions}.
\newblock \bibinfo{journal}{\emph{Econometrica: Journal of the Econometric
  Society}} (\bibinfo{year}{1988}), \bibinfo{pages}{1247--1257}.
\newblock


\bibitem[\protect\citeauthoryear{Cummings, Devanur, Huang, and Wang}{Cummings
  et~al\mbox{.}}{2020}]%
        {cummings2020algorithmic}
\bibfield{author}{\bibinfo{person}{Rachel Cummings}, \bibinfo{person}{Nikhil~R
  Devanur}, \bibinfo{person}{Zhiyi Huang}, {and} \bibinfo{person}{Xiangning
  Wang}.} \bibinfo{year}{2020}\natexlab{}.
\newblock \showarticletitle{Algorithmic price discrimination}. In
  \bibinfo{booktitle}{\emph{Proceedings of the Fourteenth Annual ACM-SIAM
  Symposium on Discrete Algorithms}}. SIAM, \bibinfo{pages}{2432--2451}.
\newblock


\bibitem[\protect\citeauthoryear{Daskalakis, Papadimitriou, and
  Tzamos}{Daskalakis et~al\mbox{.}}{2016}]%
        {daskalakis2016does}
\bibfield{author}{\bibinfo{person}{Constantinos Daskalakis},
  \bibinfo{person}{Christos Papadimitriou}, {and} \bibinfo{person}{Christos
  Tzamos}.} \bibinfo{year}{2016}\natexlab{}.
\newblock \showarticletitle{Does Information Revelation Improve Revenue?}. In
  \bibinfo{booktitle}{\emph{Proceedings of the 2016 ACM Conference on Economics
  and Computation}}. \bibinfo{pages}{233--250}.
\newblock


\bibitem[\protect\citeauthoryear{Dughmi}{Dughmi}{2017}]%
        {dughmi2017algorithmic}
\bibfield{author}{\bibinfo{person}{Shaddin Dughmi}.}
  \bibinfo{year}{2017}\natexlab{}.
\newblock \showarticletitle{Algorithmic information structure design: a
  survey}.
\newblock \bibinfo{journal}{\emph{ACM SIGecom Exchanges}} \bibinfo{volume}{15},
  \bibinfo{number}{2} (\bibinfo{year}{2017}), \bibinfo{pages}{2--24}.
\newblock


\bibitem[\protect\citeauthoryear{Dughmi and Xu}{Dughmi and Xu}{2019}]%
        {dughmi2019algorithmic}
\bibfield{author}{\bibinfo{person}{Shaddin Dughmi} {and}
  \bibinfo{person}{Haifeng Xu}.} \bibinfo{year}{2019}\natexlab{}.
\newblock \showarticletitle{Algorithmic bayesian persuasion}.
\newblock \bibinfo{journal}{\emph{SIAM J. Comput.}} \bibinfo{number}{0}
  (\bibinfo{year}{2019}), \bibinfo{pages}{STOC16--68}.
\newblock


\bibitem[\protect\citeauthoryear{Emek, Feldman, Gamzu, Paes~Leme, and
  Tennenholtz}{Emek et~al\mbox{.}}{2012}]%
        {Emek12}
\bibfield{author}{\bibinfo{person}{Yuval Emek}, \bibinfo{person}{Michal
  Feldman}, \bibinfo{person}{Iftah Gamzu}, \bibinfo{person}{Renato Paes~Leme},
  {and} \bibinfo{person}{Moshe Tennenholtz}.} \bibinfo{year}{2012}\natexlab{}.
\newblock \showarticletitle{Signaling Schemes for Revenue Maximization}. In
  \bibinfo{booktitle}{\emph{Proceedings of the 13th ACM Conference on
  Electronic Commerce}} (Valencia, Spain) \emph{(\bibinfo{series}{EC '12})}.
  \bibinfo{publisher}{ACM}, \bibinfo{address}{New York, NY, USA},
  \bibinfo{pages}{514--531}.
\newblock
\showISBNx{978-1-4503-1415-2}
\urldef\tempurl%
\url{https://doi.org/10.1145/2229012.2229051}
\showDOI{\tempurl}


\bibitem[\protect\citeauthoryear{Fu, Jordan, Mahdian, Nadav, Talgam-Cohen, and
  Vassilvitskii}{Fu et~al\mbox{.}}{2012}]%
        {fu2012ad}
\bibfield{author}{\bibinfo{person}{Hu Fu}, \bibinfo{person}{Patrick Jordan},
  \bibinfo{person}{Mohammad Mahdian}, \bibinfo{person}{Uri Nadav},
  \bibinfo{person}{Inbal Talgam-Cohen}, {and} \bibinfo{person}{Sergei
  Vassilvitskii}.} \bibinfo{year}{2012}\natexlab{}.
\newblock \showarticletitle{Ad auctions with data}. In
  \bibinfo{booktitle}{\emph{International Symposium on Algorithmic Game
  Theory}}. Springer, \bibinfo{pages}{168--179}.
\newblock


\bibitem[\protect\citeauthoryear{Giovannoni and Makris}{Giovannoni and
  Makris}{2014}]%
        {giovannoni2014reputational}
\bibfield{author}{\bibinfo{person}{Francesco Giovannoni} {and}
  \bibinfo{person}{Miltiadis Makris}.} \bibinfo{year}{2014}\natexlab{}.
\newblock \showarticletitle{Reputational bidding}.
\newblock \bibinfo{journal}{\emph{International Economic Review}}
  \bibinfo{volume}{55}, \bibinfo{number}{3} (\bibinfo{year}{2014}),
  \bibinfo{pages}{693--710}.
\newblock


\bibitem[\protect\citeauthoryear{Hausch}{Hausch}{1987}]%
        {hausch1987asymmetric}
\bibfield{author}{\bibinfo{person}{Donald~B Hausch}.}
  \bibinfo{year}{1987}\natexlab{}.
\newblock \showarticletitle{An asymmetric common-value auction model}.
\newblock \bibinfo{journal}{\emph{The RAND Journal of Economics}}
  (\bibinfo{year}{1987}), \bibinfo{pages}{611--621}.
\newblock


\bibitem[\protect\citeauthoryear{Kamenica}{Kamenica}{2019}]%
        {kamenica2019bayesian}
\bibfield{author}{\bibinfo{person}{Emir Kamenica}.}
  \bibinfo{year}{2019}\natexlab{}.
\newblock \showarticletitle{Bayesian persuasion and information design}.
\newblock \bibinfo{journal}{\emph{Annual Review of Economics}}
  \bibinfo{volume}{11} (\bibinfo{year}{2019}), \bibinfo{pages}{249--272}.
\newblock


\bibitem[\protect\citeauthoryear{Kamenica and Gentzkow}{Kamenica and
  Gentzkow}{2011}]%
        {Kamenica2011}
\bibfield{author}{\bibinfo{person}{Emir Kamenica} {and}
  \bibinfo{person}{Matthew Gentzkow}.} \bibinfo{year}{2011}\natexlab{}.
\newblock \showarticletitle{Bayesian persuasion}.
\newblock \bibinfo{journal}{\emph{The American Economic Review}}
  \bibinfo{volume}{101}, \bibinfo{number}{6} (\bibinfo{year}{2011}),
  \bibinfo{pages}{2590--2615}.
\newblock


\bibitem[\protect\citeauthoryear{Li and Das}{Li and Das}{2019}]%
        {li2019revenue}
\bibfield{author}{\bibinfo{person}{Zhuoshu Li} {and} \bibinfo{person}{Sanmay
  Das}.} \bibinfo{year}{2019}\natexlab{}.
\newblock \showarticletitle{Revenue enhancement via asymmetric signaling in
  interdependent-value auctions}. In \bibinfo{booktitle}{\emph{Proceedings of
  the AAAI Conference on Artificial Intelligence}}, Vol.~\bibinfo{volume}{33}.
  \bibinfo{pages}{2093--2100}.
\newblock


\bibitem[\protect\citeauthoryear{Liu}{Liu}{2012}]%
        {liu2012takeover}
\bibfield{author}{\bibinfo{person}{Tingjun Liu}.}
  \bibinfo{year}{2012}\natexlab{}.
\newblock \showarticletitle{Takeover bidding with signaling incentives}.
\newblock \bibinfo{journal}{\emph{The Review of Financial Studies}}
  \bibinfo{volume}{25}, \bibinfo{number}{2} (\bibinfo{year}{2012}),
  \bibinfo{pages}{522--556}.
\newblock


\bibitem[\protect\citeauthoryear{Luo, Li, Yin, et~al\mbox{.}}{Luo
  et~al\mbox{.}}{2014}]%
        {luo2014efficient}
\bibfield{author}{\bibinfo{person}{Wei Luo}, \bibinfo{person}{Bing Li},
  \bibinfo{person}{Xiangrong Yin}, {et~al\mbox{.}}}
  \bibinfo{year}{2014}\natexlab{}.
\newblock \showarticletitle{On efficient dimension reduction with respect to a
  statistical functional of interest}.
\newblock \bibinfo{journal}{\emph{The Annals of Statistics}}
  \bibinfo{volume}{42}, \bibinfo{number}{1} (\bibinfo{year}{2014}),
  \bibinfo{pages}{382--412}.
\newblock


\bibitem[\protect\citeauthoryear{Mandel}{Mandel}{2009}]%
        {mandel2009art}
\bibfield{author}{\bibinfo{person}{Benjamin~R Mandel}.}
  \bibinfo{year}{2009}\natexlab{}.
\newblock \showarticletitle{Art as an investment and conspicuous consumption
  good}.
\newblock \bibinfo{journal}{\emph{American Economic Review}}
  \bibinfo{volume}{99}, \bibinfo{number}{4} (\bibinfo{year}{2009}),
  \bibinfo{pages}{1653--63}.
\newblock


\bibitem[\protect\citeauthoryear{Marschak and Miyasawa}{Marschak and
  Miyasawa}{1968}]%
        {marschak1968economic}
\bibfield{author}{\bibinfo{person}{Jacob Marschak} {and}
  \bibinfo{person}{Koichi Miyasawa}.} \bibinfo{year}{1968}\natexlab{}.
\newblock \showarticletitle{Economic comparability of information systems}.
\newblock \bibinfo{journal}{\emph{International Economic Review}}
  \bibinfo{volume}{9}, \bibinfo{number}{2} (\bibinfo{year}{1968}),
  \bibinfo{pages}{137--174}.
\newblock


\bibitem[\protect\citeauthoryear{Milgrom and Weber}{Milgrom and Weber}{1982}]%
        {milgrom1982theory}
\bibfield{author}{\bibinfo{person}{Paul~R Milgrom} {and}
  \bibinfo{person}{Robert~J Weber}.} \bibinfo{year}{1982}\natexlab{}.
\newblock \showarticletitle{A theory of auctions and competitive bidding}.
\newblock \bibinfo{journal}{\emph{Econometrica: Journal of the Econometric
  Society}} (\bibinfo{year}{1982}), \bibinfo{pages}{1089--1122}.
\newblock


\bibitem[\protect\citeauthoryear{Myerson}{Myerson}{1981}]%
        {myerson1981optimal}
\bibfield{author}{\bibinfo{person}{Roger~B Myerson}.}
  \bibinfo{year}{1981}\natexlab{}.
\newblock \showarticletitle{Optimal auction design}.
\newblock \bibinfo{journal}{\emph{Mathematics of operations research}}
  \bibinfo{volume}{6}, \bibinfo{number}{1} (\bibinfo{year}{1981}),
  \bibinfo{pages}{58--73}.
\newblock


\bibitem[\protect\citeauthoryear{Paes~Leme, Sivan, and Teng}{Paes~Leme
  et~al\mbox{.}}{2020}]%
        {paes2020competitive}
\bibfield{author}{\bibinfo{person}{Renato Paes~Leme},
  \bibinfo{person}{Balasubramanian Sivan}, {and} \bibinfo{person}{Yifeng
  Teng}.} \bibinfo{year}{2020}\natexlab{}.
\newblock \showarticletitle{Why Do Competitive Markets Converge to First-Price
  Auctions?}. In \bibinfo{booktitle}{\emph{Proceedings of The Web Conference
  2020}}. \bibinfo{pages}{596--605}.
\newblock


\bibitem[\protect\citeauthoryear{Perry and Reny}{Perry and Reny}{1999}]%
        {perry1999failure}
\bibfield{author}{\bibinfo{person}{Motty Perry} {and} \bibinfo{person}{Philip~J
  Reny}.} \bibinfo{year}{1999}\natexlab{}.
\newblock \showarticletitle{On the failure of the linkage principle in
  multi-unit auctions}.
\newblock \bibinfo{journal}{\emph{Econometrica}} \bibinfo{volume}{67},
  \bibinfo{number}{4} (\bibinfo{year}{1999}), \bibinfo{pages}{895--900}.
\newblock


\bibitem[\protect\citeauthoryear{Rosales, Cheng, and Manavoglu}{Rosales
  et~al\mbox{.}}{2012}]%
        {rosales2012post}
\bibfield{author}{\bibinfo{person}{R{\'o}mer Rosales}, \bibinfo{person}{Haibin
  Cheng}, {and} \bibinfo{person}{Eren Manavoglu}.}
  \bibinfo{year}{2012}\natexlab{}.
\newblock \showarticletitle{Post-click conversion modeling and analysis for
  non-guaranteed delivery display advertising}. In
  \bibinfo{booktitle}{\emph{Proceedings of the fifth ACM international
  conference on Web search and data mining}}. \bibinfo{pages}{293--302}.
\newblock


\bibitem[\protect\citeauthoryear{Roughgarden}{Roughgarden}{2010}]%
        {roughgarden2010algorithmic}
\bibfield{author}{\bibinfo{person}{Tim Roughgarden}.}
  \bibinfo{year}{2010}\natexlab{}.
\newblock \showarticletitle{Algorithmic game theory}.
\newblock \bibinfo{journal}{\emph{Commun. ACM}} \bibinfo{volume}{53},
  \bibinfo{number}{7} (\bibinfo{year}{2010}), \bibinfo{pages}{78--86}.
\newblock


\bibitem[\protect\citeauthoryear{Roughgarden and Talgam-Cohen}{Roughgarden and
  Talgam-Cohen}{2013}]%
        {roughgarden2013optimal}
\bibfield{author}{\bibinfo{person}{Tim Roughgarden} {and}
  \bibinfo{person}{Inbal Talgam-Cohen}.} \bibinfo{year}{2013}\natexlab{}.
\newblock \showarticletitle{Optimal and near-optimal mechanism design with
  interdependent values}. In \bibinfo{booktitle}{\emph{Proceedings of the
  fourteenth ACM conference on Electronic commerce}}.
  \bibinfo{pages}{767--784}.
\newblock


\bibitem[\protect\citeauthoryear{Savage}{Savage}{1971}]%
        {savage1971elicitation}
\bibfield{author}{\bibinfo{person}{Leonard~J Savage}.}
  \bibinfo{year}{1971}\natexlab{}.
\newblock \showarticletitle{Elicitation of personal probabilities and
  expectations}.
\newblock \bibinfo{journal}{\emph{J. Amer. Statist. Assoc.}}
  \bibinfo{volume}{66}, \bibinfo{number}{336} (\bibinfo{year}{1971}),
  \bibinfo{pages}{783--801}.
\newblock


\bibitem[\protect\citeauthoryear{Shalev-Shwartz and Ben-David}{Shalev-Shwartz
  and Ben-David}{2014}]%
        {shalev2014understanding}
\bibfield{author}{\bibinfo{person}{Shai Shalev-Shwartz} {and}
  \bibinfo{person}{Shai Ben-David}.} \bibinfo{year}{2014}\natexlab{}.
\newblock \bibinfo{booktitle}{\emph{Understanding machine learning: From theory
  to algorithms}}.
\newblock \bibinfo{publisher}{Cambridge university press}.
\newblock


\bibitem[\protect\citeauthoryear{Shen, Tang, and Zeng}{Shen
  et~al\mbox{.}}{2018}]%
        {shen2018closed}
\bibfield{author}{\bibinfo{person}{Weiran Shen}, \bibinfo{person}{Pingzhong
  Tang}, {and} \bibinfo{person}{Yulong Zeng}.} \bibinfo{year}{2018}\natexlab{}.
\newblock \showarticletitle{A closed-form characterization of buyer signaling
  schemes in monopoly pricing}. In \bibinfo{booktitle}{\emph{Proceedings of the
  17th International Conference on Autonomous Agents and MultiAgent Systems}}.
  \bibinfo{pages}{1531--1539}.
\newblock


\bibitem[\protect\citeauthoryear{Shen, Tang, and Zeng}{Shen
  et~al\mbox{.}}{2019}]%
        {shen2019buyer}
\bibfield{author}{\bibinfo{person}{Weiran Shen}, \bibinfo{person}{Pingzhong
  Tang}, {and} \bibinfo{person}{Yulong Zeng}.} \bibinfo{year}{2019}\natexlab{}.
\newblock \showarticletitle{Buyer signaling games in auctions}. In
  \bibinfo{booktitle}{\emph{Proceedings of the 18th International Conference on
  Autonomous Agents and MultiAgent Systems}}. \bibinfo{pages}{1591--1599}.
\newblock


\bibitem[\protect\citeauthoryear{Sundararajan and Talgam-Cohen}{Sundararajan
  and Talgam-Cohen}{2016}]%
        {sundararajan2016prediction}
\bibfield{author}{\bibinfo{person}{Mukund Sundararajan} {and}
  \bibinfo{person}{Inbal Talgam-Cohen}.} \bibinfo{year}{2016}\natexlab{}.
\newblock \showarticletitle{Prediction and welfare in ad auctions}.
\newblock \bibinfo{journal}{\emph{Theory of Computing Systems}}
  \bibinfo{volume}{59}, \bibinfo{number}{4} (\bibinfo{year}{2016}),
  \bibinfo{pages}{664--682}.
\newblock


\bibitem[\protect\citeauthoryear{Syrgkanis, Kempe, and Tardos}{Syrgkanis
  et~al\mbox{.}}{2013}]%
        {Syrgkanis13}
\bibfield{author}{\bibinfo{person}{Vasilis Syrgkanis}, \bibinfo{person}{David
  Kempe}, {and} \bibinfo{person}{Eva Tardos}.} \bibinfo{year}{2013}\natexlab{}.
\newblock \bibinfo{booktitle}{\emph{Information asymmetries in common-value
  auctions with discrete signals}}.
\newblock \bibinfo{type}{{T}echnical {R}eport}.
  \bibinfo{institution}{Preprint}.
\newblock


\bibitem[\protect\citeauthoryear{Wan and Beil}{Wan and Beil}{2009}]%
        {wan2009rfq}
\bibfield{author}{\bibinfo{person}{Zhixi Wan} {and} \bibinfo{person}{Damian~R
  Beil}.} \bibinfo{year}{2009}\natexlab{}.
\newblock \showarticletitle{RFQ auctions with supplier qualification
  screening}.
\newblock \bibinfo{journal}{\emph{Operations Research}} \bibinfo{volume}{57},
  \bibinfo{number}{4} (\bibinfo{year}{2009}), \bibinfo{pages}{934--949}.
\newblock


\bibitem[\protect\citeauthoryear{Wolinsky}{Wolinsky}{2002}]%
        {wolinsky2002eliciting}
\bibfield{author}{\bibinfo{person}{Asher Wolinsky}.}
  \bibinfo{year}{2002}\natexlab{}.
\newblock \showarticletitle{Eliciting information from multiple experts}.
\newblock \bibinfo{journal}{\emph{Games and Economic Behavior}}
  \bibinfo{volume}{41}, \bibinfo{number}{1} (\bibinfo{year}{2002}),
  \bibinfo{pages}{141--160}.
\newblock


\end{thebibliography}

\newpage
% Appendix
\appendix
% \section{Supplementary materials}
\section{Missing Details in Mechanism Characterization}
\label{append:mech-char}

To complete our characterization of the solution concepts for \newmodel{}. We also formally define the Bayesian \infoIC{} and interim IR, that is a weaker solution concept than the ex-post \infoIC{} and ex-post IR in our results.

\begin{definition}[Bayesian \infoIC{} and interim IR] 
% \hf{restructure this as in Definition 2.1} 
A mechanism $(x,p)$ is  Bayesian \infoIC{} 
if for every bidder $i$ with type $b_i, t_i\in T_i$ and experiment $\tau_i \mps \lambda_i$, given other bidders' fully revealing experiment $\tau_{-i}$,
% \begin{multline*}
%   \Ex_{ \theta_i } \Ex_{ t_{-i}, \theta_{-i} } \left[  x_i (\types || \theta_i, \theta_{-i})  \cdot v_i( t_i || \theta_i) - p_i (\types || \theta_i, \theta_{-i})   \right] \\
%  \geq \Ex_{ s_i }\Ex_{ \theta_i \sim s_i } \Ex_{ t_{-i}, \theta_{-i} }  [ x_i (\types || s_i, \theta_{-i})  \cdot v_i( t_i || \theta_i) - p_i (\types || s_i, \theta_{-i})  ]  
% \end{multline*}
\begin{equation*}
\Ex_{t_{-i}\sim h_{-i}} \exputil (t_i, t_{-i}; \tau_i, \tau_{-i}) \geq \Ex_{t_{-i}\sim h_{-i}} \exputil (b_i, t_{-i}; \lambda_i, \tau_{-i})   
\end{equation*}

Similarly, the mechanism is interim IR if it satisfies for every bidder $i$, for any fully revealing signal profile $\sigma_i, \sigma_{-i}$ and {truthful type} $t_i$,
% $$
%  \Ex_{ t_{-i}, \theta_{-i} } \left[  x_i (\types || \theta_i, \theta_{-i})  \cdot v_i( t_i || \theta_i) - p_i (\types || \theta_i, \theta_{-i})   \right] \geq 0
% $$
\begin{equation*}
\Ex_{t_{-i}\sim h_{-i}} \util (t_i, t_{-i}; \sigma_i, \sigma_{-i}) \geq 0 
\end{equation*}

\end{definition}

That is,  participating, reporting true type and revealing full information is a Bayes-Nash equilibrium of the corresponding game in the interim stage, where each bidder knows his own private signal but not the others. Note that in this solution concept, it is crucial for each bidder to know the prior distribution of other bidders' states and types in order to estimate his utility.  
% Hence, Bayesian \infoIC{} and interim IR 

Next, we provide the formal proof for the revelation principle.
\begin{theorem}[Restatement of Theorem \ref{thm:rev-principle}]
% Under the \newmodel, for any mechanism that implements a Bayes-Nash equilibrium, there always exists another truthful, direct mechanism that implements this Bayes-Nash equilibrium. 
Under the \newmodel, for any mechanism that implements a Bayes-Nash equilibrium, there always exists another truthful (in ex-post sense), direct mechanism that implements this Bayes-Nash equilibrium. 
\end{theorem} 

\begin{proof}
This proof follows the similar idea of the Revelation Principle in classic setup but with the additional complication of bidder's partial information revelation.
Suppose in the Bayes-Nash equilibrium of $\M$, each bidder $i$ reports the type $b_i$, commits to an experiment $\lambda_i$ and accordingly generate the signal $s_i \sim \lambda_i$. $\M$ thus receives the input $\bids,\s$ and determines an allocation and payment outcome $(x,p)$.
Such equilibrium can be exactly simulated by a truthful, direct mechanism $\bar{\M}$ in following steps:
\begin{enumerate}
    \item $\bar{\M}$ receives the input of truthful type profile $\types$ and information variables $\sigma$.
    \item $\bar{\M}$ constructs the type profile $\bids$ in the equilibrium,
    transforms each information variable $\sigma_i$ into the signal $s_i \sim \pi_i(\sigma_i)$, according to each bidder's committed signaling scheme $\pi_i$ in $\M$, and send them to $\M$ as the input. 
    \item $\bar{\M}$ returns the allocation, payment outcome $(x,p)$ returned by $\M$ at the equilibrium.
\end{enumerate}
As mentioned, since the experiment $\lambda_i \mps \tau_i$, such $\pi_i$ must exists and can be directly constructed through the correlation between the experiment outcomes of $\lambda_i, \tau_i$.
The bidders' truthful behavior is the same as playing the equilibrium strategy in $\M$ and receiving the equilibrium outcome in $\M$. Hence, truthfulness forms the Bayes-Nash equilibrium of $\bar{\M}$.
\end{proof}

\newpage

% \section{Details of Constructed Meta Mechanisms and Formal Proofs}
% \label{append:proofs}

\section{Missing Details and Proofs of the Expected-Vickrey Auction}
\label{append:expected-vickrey}
In this section, we provide the formal proofs as well as the full descriptions of the Expected-Vickrey Auction, omitted in main paper due to space limit.
% , as well as the full descriptions of the meta mechanisms with chosen base mechanism, the Expected-Vickrey Auction and the Simulated-Myerson Auction.

\begin{theorem}[Restatement of Theorem \ref{thm:expected-meta-reduce}]
An \emm{} $\bar{\M}$ is ex-post \infoIC{} and IR if and only if its base mechanism $\M$ is proper, IC and IR.
\end{theorem}
\begin{proof}
It is straightforward to verify that at full revelation, the IR condition of $\bar{\M}$ is equivalent to the IR condition of $\M$ by setting the value profile $(v_i, v_{-i}) = \v(\types; \bsigma)$. Hence, the \emm{} $\bar{\M}$ is IR if and only if its base mechanism $\M$ is IR.

It remains to show the \emm{} $\bar{\M}$ is ex-post \infoIC{} if and only if its base mechanism $\M$ is proper and IC.
% $$
% \exputil(t_i,t_{-i}; \tau_{i},\tau_{-i}) \geq
%  \exputil(b_i,t_{-i}; \tau_{i},\tau_{-i})  = \exputil(b_i,t_{-i}; \lambda_{i},\tau_{-i})   
%   \quad \forall b_i \in T_i, \tau_{i} \mps \lambda_{i}
% $$

% there is some variable $v_i$ independent of the type variable 
% We simply need to consider the cases when the type variable or information variable degenerates in some value function $v_i$.  

We start with the "if" direction, and it suffices to show that the following two inequalities holds, respectively, for any bidder $i\in [n]$,
\begin{equation}
 \label{eq:expected-iic-decompose}
 \exputil(t_i,t_{-i}; \tau_{i},\tau_{-i})
  \geq 
  \exputil(t_i,t_{-i}; \lambda_{i},\tau_{-i})
  \geq 
  \exputil(b_i,t_{-i}; \lambda_{i},\tau_{-i}),
  \quad \forall b_i \in T_i, \tau_{i} \mps \lambda_{i}, 
\end{equation}
where $(t_i, t_{-i})$, $(\tau_{i}, \tau_{-i})$ are the true type profile and fully-revealing experiments. 

To ease the proof of the inequalities, we first rewrite the above terms of expected utility by introducing three random variables $V^{\tau}_i, V^{\lambda}_i, V^{\tau}_{-i}$. From any experiment $\tau_i \mps \lambda_i$ and the other bidder's fully-revealing signal $\sigma_{-i}$,  we let $V^{\tau}_i = v_i(t_i; \sigma_{i})$ with probability $\tau_i(\sigma_i)$, $V^{\lambda}_i = v_i(t_i; s_i, \sigma_{-i})$ with probability $\lambda_i(s_i)$ and $V^{\tau}_{-i}=v_{-i}(t_{-i}; \sigma_{-i})$ with $\tau_{-i}(\sigma_{-i})$.~\footnote{For notational convenience, we will simply use $s_i$ to represent the uncertainty on $\theta_i$ under the joint information of both $s_i$ and  $\sigma_{-i}$ in several proofs of \infoIC{} through this paper --- since it is equivalent to think of that the bidder $i$ directly reveals the additional information on $\theta_i$ from its $s_i$, given that the auctioneer's belief of $\theta_i$ is the same.} 
Clearly, $V^{\tau}_i \mps V^{\lambda}_i$, as we can verify that $\forall s_i, \sum_{\sigma_i} \Pr(\sigma_i|s_i) v(t_i;\sigma_i) = \sum_{\sigma_i} \Pr(\sigma_i|s_i)\Pr(\theta_i|\sigma_i) v(t_i;\theta_i) = \Pr(\theta_i|s_i) = v(t_i;\theta_i) = v(t_i;s_i)$ and thus $\Ex[V^{\tau}_i | V^{\lambda}_i] = V^{\lambda}_i$.
As the remaining bidders are fully-revealing their information variable $\sigma_{-i}$, the realization of their value profile $v_{-i}(t_{-i}; \theta_{-i})$ are independent of the bidder's reported type or signal, due to the conditional independence assumption. Recall that $u_i ( v_i, v_{-i} )$ is the truthful bidder $i$'s utility in $\M$.
Hence, the bidder $i$'s expected utility can be equivalently written as, 
$$\exputil (t_i, t_{-i}; \tau_i, \tau_{-i}) = \Ex_{ v_{-i}\sim V_{-i}^{\tau}} \Ex_{ v_i\sim V_i^{\tau}} [ x_i ( v_i, v_{-i} )  \cdot v_i - p_i (v_i, v_{-i}) ],$$ 
$$\exputil (t_i, t_{-i}; \lambda_i, \tau_{-i}) = \Ex_{ v_{-i}\sim V_{-i}^{\tau}} \Ex_{ v_i\sim V_i^{\lambda}} [ x_i ( v_i, v_{-i} )  \cdot v_i - p_i (v_i, v_{-i}) ].$$ 
In addition, we introduce the mapping $\eta: \RR \to \RR$ from the expected value under true type $v_i(t_i; s_i, \sigma_{-i})$ to the expected value under reported type $v_i(b_i; s_i, \sigma_{-i})$. So the expected utility under possibly misreported type $b_i$ can be expressed as,
$$\exputil (b_i, t_{-i}; \lambda_i, \tau_{-i}) = \Ex_{ v_{-i}\sim V_{-i}^{\tau}} \Ex_{ v_i\sim V_i^{\lambda}, v'_i=\eta(v_i)} [ x_i ( v'_i, v_{-i} )  \cdot v_i - p_i (v'_i, v_{-i}) ].$$

By removing their common source of randomness, the inequalities in \eqref{eq:expected-iic-decompose} can be implied by the following two inequalities:
\begin{equation}\label{eq:expected-difference-1}
 \Ex_{v_i\sim V^{\tau}_i} [ x_i ( v_i, v_{-i} )  \cdot v_i - p_i (v_i, v_{-i}) ]
\geq  
\Ex_{v_i\sim V^{\lambda}_i} [ x_i ( v_i, v_{-i} )  \cdot v_i - p_i (v_i, v_{-i}) ], \quad \forall v_{-i},
\end{equation}
\begin{equation}\label{eq:expected-difference-2}
  x_i ( v_i, v_{-i} )  \cdot v_i - p_i (v_i, v_{-i})  \geq   x_i ( v'_i, v_{-i} )  \cdot v_i - p_i (v'_i, v_{-i}), \quad \forall v_{i}, v_{-i}.   
\end{equation} 
% \begin{equation}\label{eq:expected-true-report}
%  \Ex_{v_i\sim V^{\lambda}_i} [ x_i ( v_i, v_{-i} )  \cdot v_i - p_i (v_i, v_{-i}) ] \geq  \Ex_{ v_i\sim V_i^{\lambda}, v'_i\sim \eta(v_i)} [ x_i ( v'_i, v_{-i} )  \cdot v_i - p_i (v'_i, v_{-i}) ], \quad \forall v_{-i}.   
% \end{equation} 

Inequality \eqref{eq:expected-difference-1} is due to Blackwell's theorem. That is, bidder $i$'s utility is convex under the proper mechanism $\M$, and $V^{\tau}_i \mps V^{\lambda}_i$.
The inequality in \eqref{eq:expected-difference-2} is directly implied by the IC of $\M$.

Now, the "only if" direction is much easier to see. That is, in the extreme cases when the bidders' value is independent of their states, the value profile only depends on the type profile. Then, ex-post \infoIC{} degenerates to IC,
$$ \exputil(t_i,t_{-i}; \tau_{i},\tau_{-i}) = x_i ( v_i, v_{-i} )  \cdot v_i - p_i (v_i, v_{-i}) \geq x_i ( v'_i, v_{-i} )  \cdot v_i - p_i (v'_i, v_{-i}) = \exputil(b_i,t_{-i}; \lambda_{i},\tau_{-i}),    $$
where $v_i = v_i(t_i;\theta_i), v'_i = v_i(b_i; \theta_i)$ and $v'_{-i} = v_{-i}(t_{-i}; \theta_i)$. Hence, it is necessary for $\M$ to be IC. Meanwhile, in the extreme cases when the bidders' value is independent of their types, the value profile only depends on the signal profile $\s$. Pick any bidder $i$ and let the remaining bidders' value profile be degenerated as $v_{-i}$ such that ex-post \infoIC{} condition degenerates as,
$$ \exputil(t_i,t_{-i}; \tau_{i},\tau_{-i}) 
= \Ex_{v_i \sim V^\tau_i}[ x_i ( v_i, v_{-i} )  \cdot v_i - p_i (v_i, v_{-i}) ] 
\geq \Ex_{v_i \sim V^\lambda_i}[ x_i ( v_i, v_{-i} )  \cdot v_i - p_i (v_i, v_{-i}) ]  
= \exputil(b_i,t_{-i}; \lambda_{i},\tau_{-i}).    $$
By Blackwell's theorem, this inequality holds for any $V^\tau_i\mps  V^\lambda_i$ only if $x_i ( v_i, v_{-i} )  \cdot v_i - p_i (v_i, v_{-i}) $ is convex in $v_i$. Hence, it is also necessary that $\M$ is proper.  

This concludes the proof that $\bar{\M}$ is ex-post IIC and IR.

\end{proof}

\begin{algorithm}[h]
    \SetAlgorithmName{Mechanism}{mechanism}{List of Mechanisms}
    % \floatname{algorithm}{Mechanism}
    \caption{The \emph{Expected}-Vickrey Auction for \newmodel{}}
    \label{al:expected-2nd}
% 	\SetAlgoNoLine
    \SetAlgoLined
% 	\KwIn{The posterior value profile distribution $V(\bids || \s) $ of all $n$ bidders}
	\KwIn{The bid profile $\bids$, signal profile $\s$}
	\KwOut{The allocation and payment $(x,p)$}
	Compute the expected value  for bidder $i$, 
	$$ v_i(\bids ; \s) =\sum_{\states \in \Theta}  \v_i(\bids ; \states) \Pr( \states|\s) $$

    Set the allocation probability for bidder $i$ as, 
        $$x_i(\bids ; \s) = \begin{cases}
        1 & i = \argmax_{j\in [n]} v_j(\bids ; \s) \\ 
        0 & \textup{otherwise}\\
        \end{cases}
        $$
        
    Set the payment for bidder $i$ as, 
        $$p_i(\bids ; \s) = \begin{cases}
        \max_{j\neq i} v_j(\bids ; \s)   & x_i(V) = 1 \\
        0 & \textup{otherwise}\\
        \end{cases}
        $$
\end{algorithm}

\begin{proposition}[Restatement of Proposition \ref{thm:truthfulness-private-value}]
Truthfulness forms a Bayes NE in ex-post stage of Expected-Vickrey Auction for \infobid{}, which achieves maximum welfare.
\end{proposition}
\begin{proof}
To see that truthfulness forms a Bayes Nash equilibrium in Mechanism \ref{al:expected-2nd}, it suffices to verify that Vickrey auction is a proper mechanism. That is, for any bidder $i$, its utility under value profiles $(v_i, v_{-i})$ is
$x_i ( v_i, v_{-i} )  \cdot v_i - p_i (v_i, v_{-i}) = \max \left\{ 0, v_i - v^* \right\}$, where $v^*=\max_{j\neq i} \{v_j \}$ is a constant independent of $v_{i}$. Hence, the utility function is a maximum function w.r.t. $v_{i}$ and therefore must be convex in $v_i$. By Theorem \ref{thm:expected-meta-reduce}, the Expected-Vickrey Auction is ex-post IIC and IR. 

Since the truthfulness is guaranteed, the mechanism always allocates to the bidder with the highest value at each fully-revealing signal profile $\bsigma$, $ \Ex_{\bsigma \sim \btau} \max_{i} v_i(t_i ; \sigma_i) $, which is the maximum welfare by definition. 
\end{proof}

\newpage

\section{Missing Details and Proofs of the Simulated-Myerson Auction}
\label{append:simulated-myerson}
In this section, we provide the formal proofs as well as the full descriptions of the Simulated-Myerson Auction, omitted in main paper due to space limit.

\begin{algorithm}[h!]
    \SetAlgorithmName{Mechanism}{mechanism}{List of Mechanisms}
    % \floatname{algorithm}{Mechanism}
    \caption{The \emph{Simulated}-Myerson Auction for \newmodel{}}
    \label{al:rev-opt}
% 	\SetAlgoNoLine
    \SetAlgoLined
	\KwIn{The bid profile $\bids$, signal profile $\s$ }
	\KwOut{The allocation and payment $(x,p)$}
	For each possible realization of $\states$ given $\s$, for each bidder $i$, compute the expected value profile $v( b_i ; \states)$ and the corresponding value distribution over type $\left\{ F_i^{\theta}(\cdot), f_i^{\theta}(\cdot) \right\}_{i\in [n]}$ according to the Meta Mechanism~\ref{al:simulated-meta}, then compute the virtual value function and the corresponding inverse function as, 
$$ \left\{ \phi_i(\ \cdot\ ; \theta_i) = v_i(\ \cdot\ ; \theta_i) - \frac{1-F_i^\theta( \cdot ) }{ f_i^\theta( \cdot )}, 
\phi_i^{-1}(\ \cdot\ ; \theta_i) \right\}_{i\in [n]} 
$$ 
	
    Use the probability $\Pr(\states | \s)$ to compute the allocation for bidder $i$ as, 
    \begin{align*}
        & x_i(\bids ; \s) = \sum_{\states \in \Theta} \Pr(\states | \s) x_i(\bids ; \states) \\
    \textup{where} \quad  & x_i(\bids ; \states) = 
        \begin{cases}
        1 & i = \argmax_{j\in [n]} \left\{ \phi_j( b_j ; \theta_j) \middle| \phi_j( b_j ; \theta_j ) > 0 \right\} \\
        0 & \textup{otherwise}\\
        \end{cases} 
    \end{align*}
        
    Use the probability $\Pr(\states | \s)$ to compute the payment for bidder $i$ as, 
        \begin{align*}
        p_i(\bids ; \s) & =  
        \sum_{\states \in \Theta} \Pr(\states | \s) p_i(\bids ; \states) \\
        \textup{where} \quad  & p_i(\bids ; \states)= 
        \begin{cases}
        v_i(b_i^*; \theta_i)
        & i = \argmax_{j\in [n]} \left\{ \phi_j( b_j ; \theta_j) \middle| \phi_j( b_j ; \theta_j ) > 0 \right\}, \\
        & \text{where } b_i^* = \phi_i^{-1} \left( \max_{j\neq i} \left\{ \phi_j( b_j ; \theta_j) \right\}; \theta_i \right) \\
        0 & \textup{otherwise}\\
        \end{cases} 
        \end{align*}

\end{algorithm}

\begin{proposition}[Restatement of Proposition \ref{prop:simulated-myerson}]
Under the information sufficiency, regular value and regular virtual value condition, truthfulness forms a Bayes NE in ex-post stage of Simulated-Myerson Auction, which achieves the optimal revenue. 
\end{proposition}

\begin{proof}

To maximize the revenue under ex-post \infoIC{} and IR, it is equivalent to solve the following optimization program:
\begin{align*}
  \max &  \Ex_{\bsigma \sim \tau} \Ex_{\types \sim h} \left[\sum_{i \in [n]} p_{i}\left( \types ; \bsigma \right)\right] \\
%   \Ex_{\states \sim g} \Ex_{\types \sim h} \left[\sum_{i \in [n]} p_{i}\left( \types ; \states \right)\right] 
%   & \\
  \tag{\infoIC{}}
  s.t. 
&  \quad \exputil (t_i, t_{-i}; \tau_i, \tau_{-i}) \geq \exputil (b_i, t_{-i}; \lambda_i, \tau_{-i})  
% & \forall t_i, b_i, t_{-i}, \forall \tau_i \mps \lambda_i,  \forall i\in [n] \\
& \forall i\in [n] \\
 \tag{IR}
&   \quad  \util (t_i, t_{-i}; \sigma_i, \sigma_{-i}) \geq 0
% & \forall t_i, t_{-i}, \forall \sigma_i, \sigma_{-i},  \forall i\in [n] \\
& \forall i\in [n] \\
& \quad  x_i (\types ; \sigma_i, \sigma_{-i}), p_i (\types ; \sigma_i, \sigma_{-i}) \geq 0  & \forall i\in [n]  \\
& \sum_{i\in [n]} x_i (\types ; \sigma_i, \sigma_{-i}) \leq 1 & 
\end{align*}
The objective is set to maximize the auctioneer's revenue in expectation given the truthful \newmodel{} behavior. 
By revelation principle, we can set the input of each payment function as the true type profile $\types \sim h$ and fully-revealing signal profile $\bsigma\sim \tau$, as there must exist an optimal mechanism that satisfies ex-post \infoIC{} and IR.

To argue that Simulated-Myerson auction $\M^*=(x^*, p^*)$, described in Mechanism \ref{al:rev-opt}, is an optimal solution to the above optimization program. We first verify that the Simulated-Myerson auction satisfies each of constraints of the optimization program:

Firstly, Myerson's auction guarantees that $x^*_i (\types ; \states) \geq 0,  p^*_i (\types ; \states) \geq 0, \sum_{i\in [n]}x^*_i (\types ; \states) \geq 0, \forall \types, \states$. This implies that, in the Simulated-Myerson auction, we have $x^*_i (\types ; \sigma_i, \sigma_{-i}) =  \Ex_{\states\sim (\sigma_i, \sigma_{-i})} x^*_i (\types ; \states) \geq 0$, $p^*_i (\types ; \sigma_i, \sigma_{-i}) =  \Ex_{\states\sim (\sigma_i, \sigma_{-i})} p^*_i (\types ; \states) \geq 0$ 
and $$ \sum_{i\in [n]} x^*_i (\types ; \sigma_i, \sigma_{-i}) = \sum_{i\in [n]} \Ex_{\states\sim (\sigma_i, \sigma_{-i})} x^*_i (\types ; \states) =  \Ex_{\states\sim (\sigma_i, \sigma_{-i})} \sum_{i\in [n]}  x^*_i (\types ; \states) \leq 1. $$ 

Secondly, the \infoIC{} and IR constraints of Simulated-Myerson auction are satisfied according to Theorem \ref{thm:simulated-meta-reduce}. In the base mechanism, i.e., Myerson's optimal auction, the allocation is monotonic w.r.t. to the virtual value. Combining with the regular virtual value condition, we know that the allocation is monotonic w.r.t. to the bidder's value. Since the information sufficiency condition is also satisfied, the Mechanism \ref{al:rev-opt} must be \infoIC{} and IR.

Finally, we argue that the maximum revenue is achieved by Mechanism \ref{al:rev-opt}.  As a property of the \smm, the revenue of Mechanism \ref{al:rev-opt} can be equivalently expressed as,
$$
\Ex_{\sigma \sim \tau} \Ex_{\types \sim h} \left[\sum_{i \in [n]} p^*_{i}\left( \types ; \sigma \right)\right]
= \Ex_{\sigma \sim \tau} \Ex_{\types \sim h} \left[\sum_{i \in [n]} \Ex_{\states \sim \sigma} p^*_{i}\left( \types ; \states \right)\right]
=  \Ex_{\states \sim g} \Ex_{\types \sim h} \left[\sum_{i \in [n]} p^*_{i}\left( \types ; \states \right)\right] .
$$
Let $K_0$ denote that the set of all mechanisms $\M=(x,p)$ in \newmodel{} that satisfy both IR and \infoIC{} constraints, $K_1$ denote the set of all \smm{}s $\M'=(x',p')$ with the base mechanism that satisfies IC and IR in \oldmodel{}. Observe that by Myerson's theorem, we have $\Ex_{\states \sim g} \Ex_{\types \sim h} \left[\sum_{i \in [n]} p^*_{i}\left( \types ; \states \right)\right] 
=  \Ex_{\states \sim g} \max_{(x',p')\in K_1}  \Ex_{\types \sim h} \left[\sum_{i \in [n]} p'_{i}\left( \types ; \states \right)\right] $. 
By Theorem \ref{thm:simulated-meta-reduce}, under the information sufficiency condition, any $M'=(p',x')\in K_1$ must also be in $K_0$. Hence, $K_0 \subseteq K_1$. We can derive that 
\begin{align*}
\Ex_{\states \sim g} \max_{(x',p')\in K_1}  \Ex_{\types \sim h} \left[\sum_{i \in [n]} p'_{i}\left( \types ; \states \right)\right] 
\geq & \max_{(x',p')\in K_1} \Ex_{\states \sim g} \Ex_{\types \sim h} \left[\sum_{i \in [n]} p'_{i}\left( \types ; \states \right)\right] 
\geq & \max_{(x,p)\in K_0} \Ex_{\states \sim g} \Ex_{\types \sim h} \left[\sum_{i \in [n]} p_{i}\left( \types ; \states \right)\right],
\end{align*}
where the first inequality follows from Jensen's inequality, and the second inequality is due to the relaxation of domain $K_0 \subseteq K_1$. 
The first term is exactly the revenue achieved by Mechanism \ref{al:rev-opt}. The last term must be no less than the optimization problem's objective, as the optimal mechanism given full knowledge of state can mimic any optimal mechanism with only partial knowledge of state,
$ \max_{(x,p)\in K_0} \Ex_{\states \sim g} \Ex_{\types \sim h} \left[\sum_{i \in [n]} p_{i}\left( \types ; \states \right)\right] 
\geq \max_{(x,p)\in K_0} \Ex_{\sigma \sim \tau} \Ex_{\types \sim h} \left[\sum_{i \in [n]} p_{i}\left( \types ; \sigma \right)\right].$ Therefore, Mechanism \ref{al:rev-opt} indeed achieves the optimal revenue.

\end{proof}

\begin{proposition}[Restatement of Proposition \ref{prop:ironing}]
If a bidder's value function $v_i(t_i; \theta_i)$ is separable in $t_i$ and $\theta_i$, after applying the ironing trick on $t_i$ w.r.t. its CDF distribution $H(t_i)$, any value distribution $F_i^{\theta}(\cdot)$ satisfies the regular virtual value condition. 
\end{proposition}

\begin{proof}
Let the separable value function be $v_i(t_i ; \theta_i) = t_i c_i(\theta_i)$. 
% \jw{This is without loss of generality, since for $v_i(t_i ; \theta_i) = f( t_i) c_i(\theta_i)$, we can redefine $t_i= f(t_i)$ for any function $f$ on $t_i$. Right?}
We have
 $\phi_i(t_i ; \theta_i) =  \phi_i(t_i) \cdot c_i(\theta_i) $, where $\phi_i(t_i)$ is essentially the virtual value for $H(t_i)$. This is because of the linearity below,
 \begin{align*}
  \phi_i(t_i ; \theta_i) 
  &=   v_i(t_i ; \theta_i) - \frac{1-F_i(t_i; \theta_i) }{ f_i(t_i ; \theta_i)} \\
  &= v_i(t_i ; \theta_i) - \frac{d v_i(t_i; \theta_i)}{d t_i} \frac{1-H_i(t_i ) }{ h_i(t_i )} \\
  &= t_i c_i(\theta_i) - c_i(\theta_i) \frac{1-H_i(t_i ) }{ h_i(t_i )} \\
  &= \phi_i(t_i) \cdot c_i(\theta_i).
\end{align*}
 So when $t_i$ is regular, the \emph{regular virtual value condition} holds, as the monotonicity is preserved with the additional multiplier $c_i(\theta_i)$. And when $t_i$ is not regular, ironing on $t_i$ makes $\phi_i(t_i)$ regular and thus ${\phi}_i(t_i ; \theta_i) = {\phi}_i(t_i) \cdot c_i(\theta_i) $ strongly regular. 

\end{proof}

\newpage

\section{Proof of Theorem \ref{thm:dominant-meta}} \label{appendix:info-reg}

\begin{theorem}[Restatement of Theorem \ref{thm:dominant-meta}]
If a mechanism ${\M}$ is ex-post \infoIC{}, ${\M}$ is dominant-strategy \infoIC{} under information regulation.
\end{theorem}
\begin{proof}
We prove by showing that for any problem instance $\G$, there exists another problem instance $\G'$ such that if ${\M}$ satisfies the ex-post \infoIC{} constraint in $\G'$, then ${\M}$ satisfies the dominant-strategy \infoIC{} constraint under information regulation in $G$. This would directly imply the Theorem \ref{thm:dominant-meta}, i.e., if ${\M}$ satisfies the ex-post \infoIC{} constraint for any problem instance $\G$, then ${\M}$ must satisfy the dominant-strategy \infoIC{} constraint under information regulation for any problem instance $\G$. 

Therefore, we pick arbitrary problem instance $\G$, where each bidder $i$ privately observes the true type $t_i$ and information variable $\sigma_i\sim \tau_{i}$. By definition, the the dominant-strategy \infoIC{} constraint is satisfied if the following inequality holds, for any $i\in [n]$, for any $b_{-i}\in T_{-i}, \tau_{-i}\mps \lambda_{-i}$,
\begin{equation}
 \label{eq:expected-iic-decompose-dominant}
 \exputil^R(t_i,b_{-i}; \tau_{i},\lambda_{-i})
  \geq 
  \exputil^R(b_i,b_{-i}; \lambda_{i},\lambda_{-i}),
  \quad \forall b_i \in T_i, \tau_{i} \mps \lambda_{i}.
\end{equation}
Under information regulation, the bidder $i$'s value is a function of its own signal $\sigma_i$ or $s_i$, regardless of the signal of other bidders $\sigma_{-i}$ or $s_{-i}$. Hence, the expected utility on LHS and RHS of inequality \eqref{eq:expected-iic-decompose-dominant} can be rewritten as, respectively,
\begin{align} \label{eq:true-full}
\exputil^R(t_i,b_{-i}; \tau_{i},\lambda_{-i}) 
&= \Ex_{s_{-i}\sim \lambda_{-i}}\Ex_{\sigma_i\sim \tau_{i}} [ x_i (t_i, b_{-i} ; \sigma_i, s_{-i})  \cdot v_i( t_i ; \sigma_i) - p_i (t_i, b_{-i} ; \sigma_i, s_{-i})],  \\  \label{eq:false-partial}
\exputil^R(b_i,b_{-i}; \lambda_{i},\lambda_{-i}) 
&= \Ex_{s_{-i}\sim \lambda_{-i}}\Ex_{s_i\sim \lambda_{i}}[ x_i (b_i,b_{-i} ; s_i, s_{-i})  \cdot v_i( t_i ; s_i) - p_i (b_i,b_{-i} ; s_i, s_{-i})].
\end{align}

We now construct a problem instance $\G'$, where the bidder $i$ has the true type $t_i$ and information variable $\sigma_i\sim \tau_{i}$, each bidder $j\neq i$ has the true type $b_j$ and information variable $\s_j\sim \lambda_{j}$.
% , the bidders'  ${\theta_i}_{i\in[n]}$ are independent from each other. 
Then, if $\M$ satisfies ex-post \infoIC{}, we claim that the following inequality holds,
% \begin{equation}
%  \label{eq:expected-iic-decompose-expost}
%  \exputil(t_i,b_{-i}; \tau_{i},\lambda_{-i})
%   \geq 
%   \exputil(b_i,b_{-i}; \lambda_{i},\lambda_{-i}),
%   \quad \forall b_i \in T_i, \tau_{i} \mps \lambda_{i}.
% \end{equation}
$$
\exputil^R(t_i,b_{-i}; \tau_{i},\lambda_{-i}) = \exputil(t_i,b_{-i}; \tau_{i},\lambda_{-i}) \geq \exputil(b_i,b_{-i}; \lambda_{i},\lambda_{-i}) \geq \exputil^R(b_i,b_{-i}; \lambda_{i},\lambda_{-i}).
$$
Here the first equality is due to the construction of $\G'$ over $\G$. That is, since the bidder $i$ is fully revealing and $\tau_{-i}\mps \lambda_{-i}$, the bidder $i$'s value is $v_i( t_i ; \sigma_i)$ regardless of the other bidder's signal $s_{i} \sim \lambda_{-i}$. The first inequality applies the ex-post \infoIC{} of $\M$. The second inequality uses the fact that aggregated experiment $\lambda_{i},\lambda_{-i}$ w.r.t. state $\theta_i$ must be a mean preserving spread of the experiment $\lambda_{i}$ --- the distribution of belief of $\theta_i$ under information regulation. The ex-post \infoIC{} of $\M$ implies that the bidder $i$'s utility follows the informativeness order of the experiments w.r.t. state $\theta_i$. Therefore, inequality \eqref{eq:expected-iic-decompose-dominant} holds, which concludes the proof.
% The RHS of \eqref{eq:expected-iic-decompose-expost} equals to equation \eqref{eq:false-partial}
% The LHS of \eqref{eq:expected-iic-decompose-expost} equals to equation \eqref{eq:true-full}, since the bidder $i$ is fully revealing and $\tau_{-i}\mps \lambda_{-i}$, the bidder $i$'s value is $v_i( t_i ; \sigma_i)$ regardless of the other bidder's signal $s_{i} \sim \lambda_{-i}$. For the RHS of \eqref{eq:expected-iic-decompose-expost}, since the aggregated experiment $\lambda_{i},\lambda_{-i}$ w.r.t. state $\theta_i$ must be a mean preserving spread of the experiment $\lambda_{i}$ in \eqref{eq:false-partial}. 
% The RHS of \eqref{eq:expected-iic-decompose-expost} equals to equation \eqref{eq:false-partial}

% For the RHS of \eqref{eq:expected-iic-decompose-expost}, we pick $\lambda'_i$ as the experiment such that for any signal $s'_i\sim\lambda'_i$, $s_i \sim \lambda_i$ and $s_{-i} \sim \lambda_{-i}$, we have $\Pr(\theta_i |s'_i, s_{-i} ) = \Pr(\theta_i | s_i ) $ and thus the belief of $\theta_i$ in $\G'$ is equivalent to the belief of $\theta_i$ under information regulation in $\G$. 
% Therefore, the RHS of \eqref{eq:expected-iic-decompose-expost} equals to equation \eqref{eq:false-partial}. Given that $\M$ is ex-post \infoIC{} in $\G'$,  $\M$ is dominant-strategy \infoIC{} in $\G$. 

\end{proof}

\end{document}